\numberwithin{equation}{section}
\theoremstyle{plain}
\newtheorem{thm}{Theorem}[section]
\newtheorem{lemma}[thm]{Lemma}
\def\bibindent{1em}
\renewcommand\@biblabel[1]{} 
\renewenvironment{thebibliography}[1]
{\section*{\refname}%
	\@mkboth{\MakeUppercase\refname}{\MakeUppercase\refname}%
	\list{\@biblabel{\@arabic\c@enumiv}}%
	{\settowidth\labelwidth{\@biblabel{}}%
		\leftmargin\labelwidth
		\advance\leftmargin15pt
		\advance\leftmargin\labelsep
		\setlength\itemindent{-10pt}
		\@openbib@code
		\usecounter{enumiv}%
		\let\p@enumiv\@empty
		\renewcommand\theenumiv{\@arabic\c@enumiv}}%
	\sloppy
	\clubpenalty4000
	\@clubpenalty \clubpenalty
	\widowpenalty4000%
	\sfcode`\.\@m}
{\def\@noitemerr
	{\@latex@warning{Empty `thebibliography' environment}}%
	\endlist}
\renewcommand\newblock{\hskip .11em\@plus.33em\@minus.07em}
\newenvironment{bottompar}{\par\vspace*{\fill}}{\clearpage}
\renewcommand{\refname}{References}
\DeclareMathOperator*{\argmin}{argmin}
\begin{document}

\title{\textbf{Efficient determination of optimised multi-arm multi-stage experimental designs with control of generalised error-rates}}
\author{\textbf{M. J. Grayling\textsuperscript{1}, J. M. S. Wason\textsuperscript{1,2}, A. P. Mander\textsuperscript{1}}\\
\small 1. Hub for Trials Methodology Research, MRC Biostatistics Unit, Cambridge, UK,
\\ \small 2. Institute of Health and Society, Newcastle University, Newcastle, UK}
\date{}
\maketitle

\noindent \textbf{Running Head:} Generalised multi-arm multi-stage experimental designs.\\

\noindent \textbf{Abstract:} Primarily motivated by the drug development process, several publications have now presented methodology for the design of multi-arm multi-stage experiments with normally distributed outcome variables of known variance. Here, we extend these past considerations to allow the design of what we refer to as an $abcd$ multi-arm multi-stage experiment. We provide a proof of how strong control of the $a$-generalised type-I familywise error-rate can be ensured. We then describe how to attain the power to reject at least $b$ out of $c$ false hypotheses, which is related to controlling the $b$-generalised type-II familywise error-rate. Following this, we detail how a design can be optimised for a scenario in which rejection of any $d$ null hypotheses brings about termination of the experiment. We achieve this by proposing a highly computationally efficient approach for evaluating the performance of a candidate design. Finally, using a real clinical trial as a motivating example, we explore the effect of the design's control parameters on the statistical operating characteristics.\\

\noindent \textbf{Keywords:} Clinical trial; Familywise error rate; Group sequential; Interim analysis; Multi-arm multi-stage; Treatment selection.\\

\begin{bottompar}
	\noindent Address correspondence to M. J. Grayling, MRC Biostatistics Unit, Forvie Site, Robinson Way, Cambridge CB2 0SR, UK; Fax: +44-(0)1223-330365; E-mail: mjg211@cam.ac.uk. 
\end{bottompar}

\section{Introduction}
\label{s:intro}

The statistical literature contains much research on the design and analysis of two-arm group sequential experiments, with the primary motivation often the desire to improve the efficiency of clinical research (see, for example, Jennison and Turnbull (2000)). Recently, several papers have sought to extend these methods to multi-arm multi-stage (MAMS) designs, which allow multiple arms to be compared to a single control arm. Assuming outcome data to be normally distributed with known variance, Magirr et al. (2012) established how the type-I familywise error-rate (FWER) could be strongly controlled in a parallel arm MAMS design. Allowing early stopping to both reject and accept null hypotheses, with any number of observations accrued in each arm during each stage, they also described how commonly utilised stopping boundaries for two-arm designs could be extended to the multi-arm setting.

Additional recommendations on how to design MAMS experiments have been presented (Wason et al., 2016), along with considerations on other outcome data types and the incorporation of covariates (Jaki and Magirr, 2013). An approach based on simulation was also advanced for the optimisation of a MAMS design (Wason and Jaki, 2012), whilst Ghosh et al. (2017) recently described an effective method for overcoming issues of computational intractability inherent in the approach of Magirr et al. (2012).

Thus, much methodology is now available for the design of MAMS experiments. However, each of these presentations desired to control the conventional type-I FWER, and power to reject a particular null hypothesis under the so-called least favourable configuration (LFC). These operating characteristics are a logical starting point, given their typical requirement in late phase clinical trials. However, allowing for more flexible error control would be highly advantageous. Specifically, there has been increased interest in recent years in experimental designs that control the generalised type-I (Lehmann and Romano, 2005; Romano and Shaikh, 2006) and type-II FWERs (Delorme et al., 2016). These constraints are useful in scenarios where classical error control leads to infeasible or undesirable required sample sizes. 

Moreover, each of the above papers assumed a so-called simultaneous stopping rule in which an experiment would be terminated as soon as a single null hypothesis was rejected. Urach and Posch (2016) recently provided a detailed assessment of the relative merits of simultaneous stopping in comparison to a separate stopping rule. In the latter approach, an experiment is continued until a decision is made for every null hypothesis. These two rules are extreme ones, particularly when the number of arms is large. In many instances we may wish to continue our investigations until some intermediate number of null hypotheses have been rejected. To date though, no methodology has been presented to facilitate this.

Within the context of clinical research, generalised error-rate control and more flexible trial stopping rules may be of most relevance in phase II trials. Explicitly, randomised designs are increasing in use (Ivanova et al., 2016) and popularity in phase II (Sharma et al., 2011; Jung, 2013), but their employment has been hindered because of associated sample size requirements (Pond and Abassi, 2011). Controlling generalised error-rates could assist in overcoming this problem. Furthermore, if many treatments are compared in a MAMS phase II trial, we may not wish to identify the single seemingly most efficacious treatment, but some larger number to carry forward for further exploration.

Therefore, here, we provide methodology for the design of MAMS experiments with control of generalised error criteria under highly adaptable stopping rules. We describe how strong control of the $a$-generalised type-I FWER can be achieved. Next, we describe how power to reject at least $b$ out of $c$ false hypotheses can be attained. Following this, we detail how a design can be optimised for a scenario in which rejection of any $d$ null hypotheses will bring about termination of the experiment. To facilitate obtaining such designs in practice, we also detail an efficient algorithm for evaluating the key operating characteristics of any candidate design. We conclude with an example based on a real trial, and with a short discussion.

\section{Methods} \label{s:model}

\subsection{Hypotheses, analysis, and stopping boundaries}

Although our methodology is relevant to many types of experiment, from here we pose our problem within the context of designing a clinical trial. Thus, our goal will be to test whether several experimental treatments are superior to some common control regimen.

First, define the sets $\mathbb{N}_p=\{q\in\mathbb{N} : \ q \le p\}$ and $\mathbb{N}_p^+=\mathbb{N}_p\backslash\{0\}$, for any $p\in\mathbb{N}^+$. Additionally, for any $\boldsymbol{v}=(v_1,\dots,v_p)^\top\in\mathbb{R}^p$, $p\in\mathbb{N}^+$, and $q\in\mathbb{N}$, define $\boldsymbol{c}(\boldsymbol{v},q)=(\boldsymbol{v}^\top,\dots,\boldsymbol{v}^\top)^\top\in\mathbb{R}^{pq}$ to be the vector formed by repeating $\boldsymbol{v}$, $q$ times. For simplicity, $\boldsymbol{0}_p=\boldsymbol{c}(0, p)$. We will also assume the convention that $\{(v_1,\dots,v_p)^\top\}^{1/2}=(v_1^{1/2},\dots,v_p^{1/2})^\top$, and take $Diag(\boldsymbol{v})$ as the $p\times p$ matrix formed by placing the elements of $\boldsymbol{v}$ along the leading diagonal (i.e., $\{Diag(\boldsymbol{v})\}_{qq}=v_q$, $\{Diag(\boldsymbol{v})\}_{qr}=0$ if $q\neq r$, $q,r\in\mathbb{N}_p^+$). Finally, $\mathbb{I}(A)$ will signify the indicator function on event $A$.

Now, we suppose the trial has $K \in\mathbb{N}^+$ experimental treatments present initially, and let $\mu_k$, $k\in\mathbb{N}_K^+$, be the mean response on the experimental treatments, and $\mu_0$ be the mean response on the control. Our hypotheses of interest are then
$H_{k} : \tau_k = \mu_k - \mu_0 \le 0$, $k\in\mathbb{N}_K^+$. As data are accrued over the course of the trial, this family of null hypotheses will be tested at a series of analyses indexed by $j\in\mathbb{N}_J^+$, for some $J \in\mathbb{N}^+$. We suppose that at any interim analysis $j\in\mathbb{N}_J^+$, to test $H_{k}$, responses $X_{i,l}\sim N(\mu_l,\sigma_l^2)$ will be available from patients $i\in\mathbb{N}_{r_{l,j}n}^+$ on treatments $l\in\{0,k\}$. We enforce that $r_{0,1}=1$, such that $n$ is the group size in the first stage for the control arm, and the $r_{k,j}$ are allocation ratios relative to $n$. We denote the set of all $r_{k,j}n$, given $n$, by $\mathscr{R}_n=\{r_{k,j}n\in\mathbb{R}^+,\ k\in\mathbb{N}_K^+,\ j\in\mathbb{N}_J^+\}$. Note that by this definition of $\mathscr{R}_n$, for now we require only that $r_{k,j}n\in\mathbb{R}^+$. We return later to discuss how we can ensure $r_{k,j}n\in\mathbb{N}^+$.

The test statistics $Z_{k,j|\mathscr{R}_n}\equiv Z_{k,j}=\hat{\tau}_{k,j|\mathscr{R}_n}\{Var(\hat{\tau}_{k,j|\mathscr{R}_n})\}^{-1/2}=\hat{\tau}_{k,j|\mathscr{R}_n}I_{k,j|\mathscr{R}_n}^{1/2}$ ($k\in\mathbb{N}_K^+$, $j\in\mathbb{N}_J^+$) will be used, where $\hat{\tau}_{k,j|\mathscr{R}_n}\equiv \hat{\tau}_{k,j} = (r_{k,j}n)^{-1}\sum_{i=1}^{r_{k,j}n}X_{i,k} - (r_{0,j}n)^{-1}\sum_{i=1}^{r_{0,j}n}X_{0,k}$. Here, we have initially explicitly stated the dependence upon $\mathscr{R}_n$, but will routinely omit it in what follows for brevity.

Similarly, we set $\boldsymbol{\tau}=(\tau_1,\dots,\tau_K)^\top$ and $\hat{\boldsymbol{\tau}}_{j|\mathscr{R}_n}\equiv \hat{\boldsymbol{\tau}}_{j}=(\hat{\tau}_{1,j},\dots,\hat{\tau}_{K,j})^\top$ for $j\in\mathbb{N}_J^+$. Moreover, take $\boldsymbol{Z}_{|\mathscr{R}_n}\equiv \boldsymbol{Z}=(\boldsymbol{Z}_{1},\dots,\boldsymbol{Z}_{J})^\top$ and $\boldsymbol{I}_{|\mathscr{R}_n}\equiv\boldsymbol{I}=(\boldsymbol{I}_{1}^\top,\dots,\boldsymbol{I}_{J}^\top)^\top$, for $\boldsymbol{Z}_{j|\mathscr{R}_n}\equiv\boldsymbol{Z}_{j}=(Z_{1,j},\dots,Z_{K,j})^\top$ and $\boldsymbol{I}_{j|\mathscr{R}_n}\equiv\boldsymbol{I}_{j}=(I_{1,j|\mathscr{R}_n},\dots,I_{K,j|\mathscr{R}_n})^\top$. Then, $\boldsymbol{Z}$ has a multivariate normal distribution (Jennison and Turnbull, 2000) with
\begin{small}
	\begin{align*}
	\mathbb{E}(\boldsymbol{Z}) &= \boldsymbol{c}(\boldsymbol{\tau}, J)\circ\boldsymbol{I}^{1/2},\\
	Cov(\boldsymbol{Z},\boldsymbol{Z}) &= \begin{Bmatrix} Cov(\boldsymbol{Z}_{1},\boldsymbol{Z}_{1}) & \dots & Cov(\boldsymbol{Z}_{1},\boldsymbol{Z}_{J}) \\ \vdots & \ddots & \vdots \\ Cov(\boldsymbol{Z}_{J},\boldsymbol{Z}_{1}) & \dots & Cov(\boldsymbol{Z}_{J},\boldsymbol{Z}_{J})\end{Bmatrix},\\
	Cov(\boldsymbol{Z}_{j_1},\boldsymbol{Z}_{j_2}) &= Cov(\hat{\boldsymbol{\tau}}_{j_1}\circ\boldsymbol{I}_{j_1}^{1/2},\hat{\boldsymbol{\tau}}_{j_2}\circ\boldsymbol{I}_{j_2}^{1/2})=Diag(\boldsymbol{I}_{j_1}^{1/2})Cov(\hat{\boldsymbol{\tau}}_{j_2},\hat{\boldsymbol{\tau}}_{j_2})Diag(\boldsymbol{I}_{j_2}^{1/2}), \qquad j_2\in\mathbb{N}_J^+,\ j_1\in\mathbb{N}_{j_2}^+,\\
	Cov(\hat{\boldsymbol{\tau}}_{j},\hat{\boldsymbol{\tau}}_{j}) &= \begin{Bmatrix} \sigma_0^2/(r_{0,j}n) + \sigma_1^2/(r_{1,j}n) & \sigma_0^2/(r_{0,j}n) & \dots & \sigma_0^2/(r_{0,j}n) \\ \sigma_0^2/(r_{0,j}n)& \ddots &\ddots & \vdots \\ \vdots & \ddots & \ddots & \sigma_0^2/(r_{0,j}n)\\
	\sigma_0^2/(r_{0,j}n) & \dots & \sigma_0^2/(r_{0,j}n) & \sigma_0^2/(r_{0,j}n) + \sigma_K^2/(r_{K,j}n) \end{Bmatrix}, \qquad j\in\mathbb{N}_{J}^+.
	\end{align*}
\end{small}

As discussed, we suppose that we would like our design to strongly control the $a$-generalised type-I FWER, the probability of incorrectly rejecting at least $a\in\mathbb{N}_K^+$ true null hypotheses, to some level $\alpha\in(0,1)$. Furthermore, we extend the typically employed LFC considerations of Dunnett (1984) to the requirement of Dunnett and Tamhane (1992), and assume it is desired that the $b$ out of $c$ familywise power (FWP) be at least $1-\beta\in(0,1)$, for some $c\in\mathbb{N}_K^+$ and $b\in\mathbb{N}_c^+$. That is, without loss of generality, the probability of rejecting at least $b$ of the null hypotheses $H_{1},\dots,H_{c}$, when $\tau_k=\delta\in\mathbb{R}^+$ and $\tau_{k'}=\delta_0<\delta$ for $k\in\mathbb{N}_c^+$ and $k'\in\mathbb{N}_K\backslash\mathbb{N}_c$, must be at least $1-\beta$. Here, $\delta$ is interpreted as an effect size at which it would be of interest to consider a treatment further, whilst if the treatment effect is below $\delta_0$ then this arm is not worth considering further. Note that this FWP requirement is directly related to the generalised type-II FWER defined in Delorme et al. (2016), who use the notation $r$ and $p$, where we use $b$ and $c$. In what follows, we set $\boldsymbol{\delta}_{c,K}=\{\boldsymbol{c}(\delta,c)^\top,\boldsymbol{c}(\delta_0,K-c)^\top\}^\top$.

Design determination is also achieved supposing that the cumulative rejection of $d\in\mathbb{N}_K^+$ or more null hypotheses, by any analysis $j\in\mathbb{N}_J^+$, will bring about the termination of the trial. With this, $d=1$ and $d=K$ correspond to the simultaneous and separate stopping rules respectively, whilst $d\in\mathbb{N}_{K}^+\backslash\{1,K\}$ represents a new previously unconsidered stopping rule.

We refer to our designs from here as $abcd$-MAMS designs. The design considered in Magirr et al. (2012), for example, is then a 1111-MAMS design. Explicitly, in a 1111-MAMS design, the conventional FWER is strongly controlled, the trial is terminated as soon as any null hypothesis is rejected, and power is provided to reject a particular treatment's null hypothesis when only it is worthy of further consideration. Jung (2008) explored, in the context of exact binomial tests, a 11$K$1-MAMS design. With this, power is instead provided to reject at least one null hypothesis amongst the entire family, when all treatments have an effectiveness considered interesting.

We define our futility (non-rejection) and efficacy (rejection) stopping boundaries as $\boldsymbol{f}=(f_1,\dots,f_J)^\top$ and $\boldsymbol{e}=(e_1,\dots,e_J)^\top$ respectively. Allowing for infinite stopping boundaries to preclude the possibility to make early decisions to reject or accept null hypotheses if so desired, we denote the set of all possible pairs $(\boldsymbol{f},\boldsymbol{e})$ by $\mathscr{B}=\left[ (\boldsymbol{f},\boldsymbol{e})\in\{\mathbb{R}\cup(-\infty)\}^J\times\{\mathbb{R}\cup(\infty)\}^J : \forall j\in\mathbb{N}_{J-1}^+\ f_j< e_j,\ f_J=e_J\in\mathbb{R}\right]$. Here, $f_J=e_J$ is enforced so that the trial terminates after at most $J$ stages as desired.

Finally, we define our trials formal conduct using two vectors, $\boldsymbol{\psi}=(\psi_1,\dots,\psi_K)^\top$ and $\boldsymbol{\omega}=(\omega_1,\dots,\omega_K)^\top$, as follows
\begin{enumerate}
\item Set $\boldsymbol{\psi}=\boldsymbol{\omega}=\boldsymbol{0}_K$ and $j=1$.
\item Conduct stage $j$ of the trial, and compute the $Z_{kj}$.
\item For $k\in\mathbb{N}_K^+$ such that $Z_{k,l}\in (f_l,e_l]$ for $l\in\mathbb{N}_{j-1}$ (with the convention $Z_{k,0}\in (f_0,e_0]$ for $k\in\mathbb{N}_K^+$)
\begin{itemize}
\item If $Z_{k,j}>e_j $ reject $H_{k}$, setting $\psi_{k}=1$ and $\omega_{k}=j$, and designating that $Z_{k,l}=\infty$ for $l\in\mathbb{N}_J\backslash\mathbb{N}_j$.
\item If $Z_{k,j}\le f_j$ do not reject $H_{k}$, setting $\omega_{k}=j$, and designating that $Z_{k,l}=-\infty$ for $l\in\mathbb{N}_J\backslash\mathbb{N}_j$.
\end{itemize}
\item If $\sum_{k=1}^{K}\mathbb{I}(\psi_{k}=1) < d$ and $\sum_{k=1}^{K}\mathbb{I}(\omega_{k}=0) > 0$, set $j=j+1$ and return to 2. Else stop the trial, and for each $k\in\mathbb{N}_K^+$ with $\omega_{k}=0$, set $\omega_{k}=j$, and designate that $Z_{k,l}=-\infty$ for $l\in\mathbb{N}_J\backslash\mathbb{N}_j$.
\end{enumerate}
Here, if fewer than $d$ null hypotheses have been rejected, and a decision has not been made for all $K$ null hypotheses, step 4 leads to a return to step 2.

\subsection{Strong control of the $a$-generalised type-I familywise error-rate}

To demonstrate how strong control of the $a$-generalised type-I FWER can be achieved, we extend the arguments of Magirr et al. (2012). First, for $\boldsymbol{\theta}=(\theta_1,\dots,\theta_K)\in\mathbb{R}^K$, $k\in\mathbb{N}_K^+$, and $j\in\mathbb{N}_J^+$, define
\begin{align*}
F_{k,j}(\theta_k) &= \{ Z_{k,j} \le f_j + (\tau_k - \theta_k)I_{k,j}^{1/2}\},\\
C_{k,j}(\theta_k) &= \{ f_j + (\tau_k - \theta_k)I_{k,j}^{1/2} < Z_{k,j} \le e_j + (\tau_k - \theta_k)I_{k,j}^{1/2}\}.
\end{align*}
Next, take $S_{a,K} = \left\{(s_1,\dots,s_p)\in(\mathbb{N}_K^+)^p,\ p\in\mathbb{N}_{K}\backslash\mathbb{N}_{K-a} : \forall (i,j)\in\mathbb{N}_{p-1}^+\times(\mathbb{N}_{p}\backslash\mathbb{N}_{i})\ s_i<s_j \right\}$. That is, $S_{a,K}$ is the set of all ordered subsets of $\mathbb{N}_K^+$ with $|S_{a,K}|\ge K-a+1$.

Then, if $\tau_k=\theta_k$ for $k\in\mathbb{N}_K^+$, the event that at least $K-a+1$ of $H_{1},\dots,H_{K}$ fail to be rejected is equivalent to
\[ \bar{R}_{a,K}(\boldsymbol{\theta})=\bigcup_{S\in S_{a,K}}\left\{ \bigcap_{k\in S}\left( \bigcup_{j=1}^J \left[ \left\{ \bigcap_{l=1}^{j-1}C_{k,l}(\theta_k) \right\}\cap F_{k,j}(\theta_k) \right] \right) \right\}, \]
taking $\cap_{l=1}^0=U$, where $U$ is the whole sample space.

\begin{lemma}\label{lemma1}
	For any $\epsilon_k>0$
	\[ \bigcup_{j=1}^J \left[ \left\{ \bigcap_{l=1}^{j-1}C_{k,l}(\theta_k+\epsilon_k) \right\}\cap F_{k,j}(\theta_k+\epsilon_k) \right]\subseteq \bigcup_{j=1}^J \left[ \left\{ \bigcap_{l=1}^{j-1}C_{k,l}(\theta_k) \right\}\cap F_{k,j}(\theta_k) \right]. \]
\end{lemma}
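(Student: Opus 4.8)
The plan is to treat this as a purely set-theoretic (sample-path) inclusion: I would fix a realisation of the statistics $Z_{k,1},\dots,Z_{k,J}$ lying in the left-hand union and exhibit a stage at which it belongs to one of the terms of the right-hand union. The inclusion has nothing to do with the joint law of $\boldsymbol{Z}$; it follows entirely from how the boundaries move with $\theta_k$. The key observation is that increasing the location parameter from $\theta_k$ to $\theta_k+\epsilon_k$ shifts every boundary strictly downward: because $I_{k,l}^{1/2}>0$ and $\epsilon_k>0$, for each $l$ both thresholds $f_l+(\tau_k-\theta_k-\epsilon_k)I_{k,l}^{1/2}$ and $e_l+(\tau_k-\theta_k-\epsilon_k)I_{k,l}^{1/2}$ lie strictly below their $\theta_k$ counterparts. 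Equivalently, $F_{k,l}(\theta_k+\epsilon_k)\subseteq F_{k,l}(\theta_k)$ and the continuation band $C_{k,l}(\theta_k+\epsilon_k)$ sits strictly below $C_{k,l}(\theta_k)$.

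First I would suppose the realisation lies in the term indexed by some $j^*$ on the left, so that $C_{k,l}(\theta_k+\epsilon_k)$ holds for every $l<j^*$ and $F_{k,j^*}(\theta_k+\epsilon_k)$ holds. Applying $F_{k,j^*}(\theta_k+\epsilon_k)\subseteq F_{k,j^*}(\theta_k)$ shows $Z_{k,j^*}\le f_{j^*}+(\tau_k-\theta_k)I_{k,j^*}^{1/2}$, so the set of stages $l\le j^*$ with $Z_{k,l}\le f_l+(\tau_k-\theta_k)I_{k,l}^{1/2}$ is non-empty; I would let $j'$ be its minimum, giving $j'\le j^*$ and $F_{k,j'}(\theta_k)$.

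Next I would verify $C_{k,l}(\theta_k)$ for every $l<j'$. The lower inequality is immediate from minimality of $j'$: any $l<j'$ fails the futility condition for $\theta_k$, so $Z_{k,l}>f_l+(\tau_k-\theta_k)I_{k,l}^{1/2}$. For the upper inequality, since $l<j'\le j^*$ the hypothesis gives $C_{k,l}(\theta_k+\epsilon_k)$, hence $Z_{k,l}\le e_l+(\tau_k-\theta_k-\epsilon_k)I_{k,l}^{1/2}<e_l+(\tau_k-\theta_k)I_{k,l}^{1/2}$, the strict step again using $\epsilon_k I_{k,l}^{1/2}>0$. The convention $\cap_{l=1}^{0}=U$ covers the case $j'=1$. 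Thus the realisation lies in the $j'$ term of the right-hand union, which proves the claim.

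The one subtlety I expect to matter is that $C_{k,l}$ is a two-sided band rather than a half-line, so the inclusion cannot be argued term by term at a common stage: raising $\theta_k$ lifts the futility boundary and can therefore force a non-rejection at a stage strictly earlier than $j^*$, so the stopping stage is not preserved. Passing to the first futility crossing $j'$ under $\theta_k$ is exactly the device that handles this, and the fact that no efficacy crossing can occur before $j'$ uses only the downward shift of the efficacy boundary. Everything remaining is routine manipulation of the two defining inequalities.
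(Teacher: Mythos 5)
Your proof is correct and follows essentially the same route as the proof the paper relies on (it cites the Appendix of Magirr et al., 2012, rather than reproducing it): the downward shift of both boundaries under $\theta_k+\epsilon_k$, followed by passing to the first stage $j'\le j^*$ at which the unshifted futility boundary $f_{j'}+(\tau_k-\theta_k)I_{k,j'}^{1/2}$ is crossed, with minimality giving the lower half of $C_{k,l}(\theta_k)$ and the shifted continuation bands giving the upper half. The subtlety you flag --- that the stopping stage is not preserved, so a term-by-term inclusion at stage $j^*$ fails and one must relocate to the first crossing --- is precisely the key device of the cited argument, and your treatment of it (including the $j'=1$ convention) is complete.
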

\begin{proof}
	See the Appendix of Magirr et al. (2012).
\end{proof}

\begin{thm}\label{thm1}
	For any $\boldsymbol{\theta}$, $\mathbb{P}(\text{Reject at least }a\text{ true }H_{k}|\boldsymbol{\theta})\le\mathbb{P}(\text{Reject at least }a\text{ true }H_{k}|\boldsymbol{0}_K)$.
\end{thm}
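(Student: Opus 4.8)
The plan is to move the whole comparison onto a single, $\boldsymbol{\theta}$-free probability space, and then to convert the set inclusion of Lemma~\ref{lemma1} directly into a probability inequality. First I would introduce the pivoted statistics $W_{k,j}=Z_{k,j}-\tau_kI_{k,j}^{1/2}$. Since $Z_{k,j}$ has mean $\tau_kI_{k,j}^{1/2}$ and a covariance structure determined only by $\boldsymbol{I}$, the joint law of $\boldsymbol{W}=(W_{k,j})$ is multivariate normal with mean $\boldsymbol{0}$ and a covariance that does not depend on $\boldsymbol{\tau}$. Re-writing the defining events gives $F_{k,j}(\theta_k)=\{W_{k,j}\le f_j-\theta_kI_{k,j}^{1/2}\}$ and $C_{k,j}(\theta_k)=\{f_j-\theta_kI_{k,j}^{1/2}<W_{k,j}\le e_j-\theta_kI_{k,j}^{1/2}\}$, so that, evaluated under $\boldsymbol{\tau}=\boldsymbol{\theta}$, each of these events is a subset of the same $\boldsymbol{W}$-space depending on $\theta_k$ alone. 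Consequently every probability in the statement can be computed against the one fixed law of $\boldsymbol{W}$, and Lemma~\ref{lemma1} becomes a genuine inclusion of subsets of that space.

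Write $A_k(\theta_k)=\bigcup_{j=1}^J[\{\bigcap_{l=1}^{j-1}C_{k,l}(\theta_k)\}\cap F_{k,j}(\theta_k)]$ for the ``$H_k$ is not rejected'' event, so that its complement $A_k(\theta_k)^{\mathrm{c}}$ is the ``$H_k$ is rejected'' event. Lemma~\ref{lemma1} states $A_k(\theta_k+\epsilon_k)\subseteq A_k(\theta_k)$, equivalently that $A_k(\theta_k)^{\mathrm{c}}$ is nondecreasing in $\theta_k$ in the sense of set inclusion. Let $T=\{k\in\mathbb{N}_K^+:\theta_k\le0\}$ denote the true null hypotheses under $\boldsymbol{\theta}$. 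Because the boundary-crossing fate of an individual $H_k$ is governed by arm $k$ alone, the event ``reject at least $a$ of $\{H_k:k\in T\}$'' equals $\bigcup\{\bigcap_{k\in S'}A_k(\theta_k)^{\mathrm{c}}:S'\subseteq T,\ |S'|=a\}$, a union of intersections that is monotone increasing in each $A_k(\theta_k)^{\mathrm{c}}$.

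With this in hand the argument is two monotonicity steps. First, for each $k\in T$ we have $\theta_k\le0$; raising every such $\theta_k$ up to $0$ enlarges each $A_k(\theta_k)^{\mathrm{c}}$ by Lemma~\ref{lemma1}, hence enlarges the monotone event above, so its $\boldsymbol{W}$-probability does not decrease. This yields $\mathbb{P}(\text{reject at least }a\text{ true }H_k\mid\boldsymbol{\theta})\le\mathbb{P}(\text{reject at least }a\text{ of }\{H_k:k\in T\}\mid\theta_k=0,\,k\in T)$. Second, since $T\subseteq\mathbb{N}_K^+$, rejecting at least $a$ hypotheses in $T$ implies rejecting at least $a$ among all of $H_1,\dots,H_K$, an inclusion of events; and because the event on the left involves only arms in $T$, its probability is unchanged if the remaining mean components are also set to $0$. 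Evaluating everything at $\boldsymbol{0}_K$, and recalling that the complement of the resulting event is exactly $\bar{R}_{a,K}(\boldsymbol{0}_K)$, closes the chain, because under $\boldsymbol{0}_K$ every hypothesis is true.

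I expect the main obstacle to be the first step rather than the algebra: the inequalities of Lemma~\ref{lemma1} are statements about set inclusion, and they can only be turned into probability inequalities once all the events for different $\boldsymbol{\theta}$ have been realised on the common law of $\boldsymbol{W}$. Care is also needed with the trial's stopping rule, which is not encoded in $\bar{R}_{a,K}$: terminating as soon as $d$ hypotheses are rejected can only convert would-be rejections into acceptances, and so never increases the number of true nulls rejected. The per-arm, boundary-crossing event $A_k(\theta_k)^{\mathrm{c}}$ therefore dominates the rejection event of the actual design pathwise, so establishing the bound for $\bar{R}_{a,K}$ suffices for the realised $abcd$-MAMS procedure.
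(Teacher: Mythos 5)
Your proof is correct and follows essentially the same route as the paper's: the paper's shifted events $F_{k,j}(\theta_k)$ and $C_{k,j}(\theta_k)$ are exactly your pivoted $W$-space events, its first inequality is your pathwise domination of trial rejections by single-arm boundary crossings, and its chain $1-\mathbb{P}\{\bar{R}_{a,l}(\boldsymbol{0}_l)\}\le 1-\mathbb{P}\{\bar{R}_{a,K}(\boldsymbol{0}_K)\}$ amounts to your two monotonicity steps (raising the true-null means to zero via Lemma~\ref{lemma1}, then enlarging the index set from the true nulls to all $K$ arms under the global null). You merely make explicit what the paper leaves implicit, namely the reduction to a common $\boldsymbol{\tau}$-free probability space and the fact that the stopping rule can only turn would-be crossings into acceptances.
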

\begin{proof}
	Suppose without loss of generality that $\theta_p\le0$ and $\theta_q>0$ for $p\in\mathbb{N}_l^+$, $l\in\mathbb{N}_K$ and $q\in\mathbb{N}_K\backslash\mathbb{N}_{l}$. Let $\boldsymbol{\theta}_l=(\theta_1,\dots,\theta_l)^\top$. Using Lemma~\ref{lemma1}
	\begin{align*}
	\mathbb{P}(\text{Reject at least }a\text{ true }H_{k}|\boldsymbol{\theta})&\le \mathbb{P}(Z_{k,j}>e_j\text{ for some }j\in\mathbb{N}_J^+ \text{ for at least } a \ k\in\mathbb{N}_l^+|\boldsymbol{\theta}),\\
	&= 1 - \mathbb{P}\{\bar{R}_{a,l}(\boldsymbol{0}_l)\},\\
	&\le 1 - \mathbb{P}\{\bar{R}_{a,l}(\boldsymbol{0}_K)\},\\
	&\le 1 - \mathbb{P}\{\bar{R}_{a,K}(\boldsymbol{0}_K)\},\\
	&=\mathbb{P}(Z_{k,j}>e_j\text{ for some }j\in\mathbb{N}_J^+ \text{ for at least } a \ k\in\mathbb{N}_K^+|\boldsymbol{0}_K).\\
	&=\mathbb{P}(\text{Reject at least }a\text{ true }H_{k}|\boldsymbol{0}_K).
	\end{align*}
\end{proof}

Thus, by Theorem~\ref{thm1}, the $a$-generalised type-I FWER can be controlled in the strong sense to level $\alpha$ by ensuring $1-\mathbb{P}\{\bar{R}_{a,K}(\boldsymbol{0}_K)\}\le\alpha$.

\subsection{Design operating characteristics}\label{desdet}

In this section, we describe an efficient approach for the evaluation of the key performance characteristics of any design.

Ghosh et al. (2017) discussed how the approach utilised in Magirr et al. (2012) is prohibitively computationally expensive even for moderate values of the number of stages or treatments. By working with the score statistics, and employing recently developed techniques for the efficient evaluation of multivariate normal integrals, they provided a method which was considerably more efficient. However, their approach makes no allowance for the fast evaluation of expected sample sizes, as it is geared towards assessing the probability one of the null hypotheses is rejected, and does not monitor the outcome for each arm. It is also specialised to the case $d=1$. Though it could in theory be extended to $d>1$ designs, to allow the optimisation of the stopping boundaries and group sizes in Section~\ref{optimal}, we here propose a different approach based on the distribution of $\boldsymbol{Z}$, the standardised Wald test statistics.

We denote the possible outcomes of the trial by $\{\boldsymbol{\Psi}=(\Psi_1,\dots,\Psi_K)^\top,\boldsymbol{\Omega}=(\Omega_1,\dots,\Omega_K)^\top\}$, where
\begin{itemize}
	\item \(\Psi_k \in \mathbb{N}_1\), with \(\Psi_{k}=1\) if \(H_{k}\) is rejected, and \(\Psi_{k}=0\) otherwise,
	\item \(\Omega_{k} \in \mathbb{N}_J^+\), with \(\Omega_{k}=j\) if $j$ is the interim analysis at which $H_{k}$ is rejected or accepted, or the whole trial is stopped and no decision is made on $H_{k}$.
\end{itemize}
With this definition, $(\boldsymbol{\psi},\boldsymbol{\omega})$ from the presented trial conduct is then the realised value of $(\boldsymbol{\Psi},\boldsymbol{\Omega})$. We denote the sample space for $(\boldsymbol{\Psi},\boldsymbol{\Omega})$, for any $(\boldsymbol{f},\boldsymbol{e})\in\mathscr{B}$, given $d$, $J$ and $K$, by $\Xi(\boldsymbol{f},\boldsymbol{e}|d,J,K)$. We present the exact definition of $\Xi(\boldsymbol{f},\boldsymbol{e}|d,J,K)$ in the Appendix.

The probability of a particular outcome $(\boldsymbol{\Psi},\boldsymbol{\Omega})=(\boldsymbol{\psi},\boldsymbol{\omega})\in\Xi(\boldsymbol{f},\boldsymbol{e}|d,J,K)$ is then
\begin{normalsize}
	\begin{align*}
	\mathbb{P}(\boldsymbol{\Psi}=\boldsymbol{\psi},\boldsymbol{\Omega}=\boldsymbol{\omega}\mid\boldsymbol{\tau},\boldsymbol{f},\boldsymbol{e},J,K,\mathscr{R}_n) &= \int_{\text{l}(1,1,\boldsymbol{\psi},\boldsymbol{\omega},\boldsymbol{f},\boldsymbol{e})}^{\text{u}(1,1,\boldsymbol{\psi},\boldsymbol{\omega},\boldsymbol{f},\boldsymbol{e})} \dots \int_{\text{l}(K,1,\boldsymbol{\psi},\boldsymbol{\omega},\boldsymbol{f},\boldsymbol{e})}^{\text{u}(K,1,\boldsymbol{\psi},\boldsymbol{\omega},\boldsymbol{f},\boldsymbol{e})} \dots \int_{\text{l}(1,J,\boldsymbol{\psi},\boldsymbol{\omega},\boldsymbol{f},\boldsymbol{e})}^{\text{u}(1,J,\boldsymbol{\psi},\boldsymbol{\omega},\boldsymbol{f},\boldsymbol{e})} \dots  \\ & \qquad \dots \int_{\text{l}(K,J,\boldsymbol{\psi},\boldsymbol{\omega},\boldsymbol{f},\boldsymbol{e})}^{\text{u}(K,J,\boldsymbol{\psi},\boldsymbol{\omega},\boldsymbol{f},\boldsymbol{e})} \! \phi\left\{\boldsymbol{x},\boldsymbol{c}(\boldsymbol{\tau},J) \circ \boldsymbol{I}^{1/2}, Cov(\boldsymbol{Z},\boldsymbol{Z})\right\} \ \\
	& \qquad \qquad \qquad \mathrm{d}x_{KJ}\dots \mathrm{d}x_{1J}\dots\mathrm{d}x_{K1}\dots\mathrm{d}x_{11},\label{probsni}
	\end{align*}
\end{normalsize}
where $\phi\{\boldsymbol{x},\boldsymbol{c}(\boldsymbol{\tau},J) \circ \boldsymbol{I}^{1/2}, Cov(\boldsymbol{Z},\boldsymbol{Z})\}$ is the probability density function of a multivariate normal distribution with mean $\boldsymbol{c}(\boldsymbol{\tau},J) \circ \boldsymbol{I}^{1/2}$ and covariance matrix $Cov(\boldsymbol{Z},\boldsymbol{Z})$, evaluated at vector $\boldsymbol{x}=(x_{11},\dots,x_{KJ})^\top$. Furthermore
\begin{align*}
\text{l}(k,j,\boldsymbol{\psi},\boldsymbol{\omega},\boldsymbol{f},\boldsymbol{e}) &= \begin{cases} e_j &: \psi_k=1,\ \omega_k=j, \\ -\infty &: \{\omega_k<j\}\cup\{\psi_k=0, \omega_k=j\}, \\ f_j \hphantom{a+}  &: \text{otherwise}, \end{cases}\\
\text{u}(k,j,\boldsymbol{\psi},\boldsymbol{\omega},\boldsymbol{f},\boldsymbol{e}) &= \begin{cases} e_j &: \{\omega_k>j\}\cup\{\psi_k=0, \omega_k=j,\max_{m\in\mathbb{N}_K^+}\omega_m=j,\ \Sigma_{m\in\mathbb{N}_K^+}\mathbb{I}(\psi_m=1)\ge d\}, \\ \infty &: \{\omega_k<j\}\cup\{\psi_k=1,\ \omega_k=j\}, \\ f_j \hphantom{a+} &: \text{otherwise}, \end{cases}
\end{align*}

for $k\in\mathbb{N}_K^+$ and $j\in\mathbb{N}_J^+$. Note that the above integral is presented for notational simplicity as being $JK$ dimensional. However, using the marginal distribution properties of the multivariate normal distribution, it can be reduced immediately by removing integrals, and the corresponding elements of $\phi(\cdot)$, for which the range of integration is $(-\infty,\infty)$. These correspond exactly to those $Z_{k,j}$ designated to be $\pm\infty$ by the trial's formal conduct above.

Denoting the trial's (random) required sample size by $N$, the expected sample size (ESS) can then be computed as
\begin{equation}
\mathbb{E}(N\mid\boldsymbol{\tau},\boldsymbol{f},\boldsymbol{e},d,J,K,\mathscr{R}_n) = \sum_{(\boldsymbol{\psi},\boldsymbol{\omega}) \in \Xi(\boldsymbol{f},\boldsymbol{e}|d,J,K)} n\left( \max_{k\in\mathbb{N}_K^+}\omega_k + \sum_{k=1}^Kr_{k,\omega_k}\right)\mathbb{P}(\boldsymbol{\psi},\boldsymbol{\omega}\mid\boldsymbol{\tau},\boldsymbol{f},\boldsymbol{e},J,K,\mathscr{R}_n).\label{ess}
\end{equation}
Whilst the probabilities of committing a $p$-generalised type-I familywise error, and of rejecting at least $p$ of $H_1,\dots,H_q$, respectively, are
\begin{align}
FWER_I(p\mid \boldsymbol{\tau},\boldsymbol{f},\boldsymbol{e},d,J,K,\mathscr{R}_n)&= \sum_{(\boldsymbol{\psi},\boldsymbol{\omega}) \in \Xi_{\text{FWER}_I}(p,\boldsymbol{\tau}\mid \boldsymbol{f},\boldsymbol{e},d,J,K)} \mathbb{P}(\boldsymbol{\psi},\boldsymbol{\omega}\mid\boldsymbol{\tau},\boldsymbol{f},\boldsymbol{e},J,K,\mathscr{R}_n),\label{fwer}\\
FWP(p,q\mid \boldsymbol{\tau},\boldsymbol{f},\boldsymbol{e},d,J,K,\mathscr{R}_n)&= \sum_{(\boldsymbol{\psi},\boldsymbol{\omega}) \in \Xi_{\text{FWP}}(p,q\mid \boldsymbol{f},\boldsymbol{e},d,J,K)} \mathbb{P}(\boldsymbol{\psi},\boldsymbol{\omega}\mid\boldsymbol{\tau},\boldsymbol{f},\boldsymbol{e},J,K,\mathscr{R}_n),\label{power}
\end{align}
for
\begin{align*}
\Xi_{\text{FWER}_I}(p,\boldsymbol{\tau}\mid \boldsymbol{f},\boldsymbol{e},d,J,K) &= \left\{(\boldsymbol{\psi},\boldsymbol{\omega}) \in \Xi(\boldsymbol{f},\boldsymbol{e}|d,J,K) : \sum_{k=1}^{K}\mathbb{I}(\psi_k = 1)\mathbb{I}(\tau_k \le 0) \ge p\right\},\\
\Xi_{\text{FWP}}(p,q\mid \boldsymbol{f},\boldsymbol{e},d,J,K) &= \left\{(\boldsymbol{\psi},\boldsymbol{\omega}) \in \Xi(\boldsymbol{f},\boldsymbol{e}|d,J,K) : \sum_{k=1}^{q}\mathbb{I}(\psi_k = 1) \ge p\right\}.
\end{align*}
Equations~(\ref{ess})-(\ref{power}) together provide the operating characteristics of candidate designs that are typically required for the determination of stopping boundaries and group sizes. Therefore, we have technically described all that is required for the determination of $abcd$-MAMS designs. However, particularly when the number of stages or treatment arms is large, it is important to be as efficient as possible in evaluating these characteristics. For this reason, we now describe how one may more improve upon the above in certain routinely faced design scenarios. 

Specifically, suppose for two treatment arms $k_1$ and $k_2$, $k_1,k_2\in\mathbb{N}_K^+$, $k_1\neq k_2$, that $\tau_{k_1}=\tau_{k_2}$, $r_{k_1,j}=r_{k_2,j}$ for $j\in\mathbb{N}_J^+$, and $\sigma_{k_1}^2=\sigma_{k_2}^2$. Then these arms are interchangeable in the sense that
\begin{align*}
\mathbb{P}(\boldsymbol{\psi},\boldsymbol{\omega}|\boldsymbol{\tau},\boldsymbol{f},\boldsymbol{e},J,K,\mathscr{R}_n)=\mathbb{P}(\boldsymbol{\psi}',\boldsymbol{\omega}'|\boldsymbol{\tau},\boldsymbol{f},\boldsymbol{e},J,K,\mathscr{R}_n),
\end{align*}
for $(\boldsymbol{\psi},\boldsymbol{\omega}),(\boldsymbol{\psi}',\boldsymbol{\omega}')\in\Xi(\boldsymbol{f},\boldsymbol{e}|d,J,K)$ if $\psi_{k_1}'=\psi_{k_2}$, $\psi_{k_2}'=\psi_{k_1}$, $\omega_{k_1}'=\omega_{k_2}$, $\omega_{k_2}'=\omega_{k_1}$, and $\psi_k'=\psi_k$, $\omega_k'=\omega_k$ for $\mathbb{N}_K^+\backslash\{k_1,k_2\}$. This allows us to reduce the number of integrals that must be considered. Here, we explain how this is achieved when $\sigma_1^2=\dots=\sigma_K^2$ and $r_{1,j}=\dots=r_{K,j}$ for $j\in\mathbb{N}_J^+$. Methodology for other scenarios should then be clear.

We place an order upon the outcome for the experimental treatments arms, setting
\begin{equation*}\label{xiEF2}
\begin{split}
& \Xi'(\boldsymbol{f},\boldsymbol{e},\boldsymbol{\tau}|d,J,K) =\left\{(\boldsymbol{\psi},\boldsymbol{\omega})\in\Xi(\boldsymbol{f},\boldsymbol{e}|d,J,K) : \forall (k_1,k_2)\in(\mathbb{N}_K^+)^2 \ \tau_{k_1}=\tau_{k_2}\right.\\
& \left.\qquad\qquad\qquad\qquad\qquad\qquad\qquad\qquad\qquad \Longrightarrow2\omega_{k_1}-\mathbb{I}(\psi_{k_1}=0)\ge 2\omega_{k_2}-\mathbb{I}(\psi_{k_2}=0) \right\}.
\end{split}
\end{equation*}
In the Appendix, we expand on the intuition behind this definition. With it however, we have
\begin{footnotesize}
	\begin{align}
	FWER_I(p\mid\boldsymbol{\tau}, \boldsymbol{f},\boldsymbol{e},d,J,K,\mathscr{R}_n) &= \sum_{(\boldsymbol{\psi},\boldsymbol{\omega}) \in \Xi_{\text{FWER}_I}'(p,\boldsymbol{\tau}\mid \boldsymbol{f},\boldsymbol{e},d,J,K)} \text{d}(\boldsymbol{\psi},\boldsymbol{\omega}|\boldsymbol{\tau},\boldsymbol{f},\boldsymbol{e},d,J,K)\mathbb{P}(\boldsymbol{\psi},\boldsymbol{\omega}\mid\boldsymbol{\tau},\boldsymbol{f},\boldsymbol{e},J,K,\mathscr{R}_n),\label{fwer2}\\
	FWP(p,q\mid \boldsymbol{\tau},\boldsymbol{f},\boldsymbol{e},d,J,K,\mathscr{R}_n) &= \sum_{(\boldsymbol{\psi},\boldsymbol{\omega}) \in \Xi_{\text{FWP}}'(p,q,\boldsymbol{\tau}\mid \boldsymbol{f},\boldsymbol{e},d,J,K)} \text{d}(\boldsymbol{\psi},\boldsymbol{\omega}|\boldsymbol{\tau},\boldsymbol{f},\boldsymbol{e},d,J,K)\mathbb{P}(\boldsymbol{\psi},\boldsymbol{\omega}\mid\boldsymbol{\tau},\boldsymbol{f},\boldsymbol{e},J,K,\mathscr{R}_n)\label{power2},\\
	\begin{split}
	\mathbb{E}(N\mid\boldsymbol{\tau},\boldsymbol{f},\boldsymbol{e},d,J,K,\mathscr{R}_n) &= \sum_{(\boldsymbol{\psi},\boldsymbol{\omega}) \in \Xi'(\boldsymbol{f},\boldsymbol{e},\boldsymbol{\tau}|d,J,K)} n\left( \max_{k\in\mathbb{N}_K^+}\omega_k + \sum_{k=1}^Kr_{k,\omega_k}\right)\text{d}(\boldsymbol{\psi},\boldsymbol{\omega}|\boldsymbol{\tau},\boldsymbol{f},\boldsymbol{e},d,J,K)\\
	&\qquad\qquad\qquad\qquad\qquad\qquad\times\mathbb{P}(\boldsymbol{\psi},\boldsymbol{\omega}\mid\boldsymbol{\tau},\boldsymbol{f},\boldsymbol{e},J,K,\mathscr{R}_n),\label{ess2}
	\end{split}
	\end{align}
\end{footnotesize}
for
\begin{align*}
\Xi_{\text{FWER}_I}'(p,\boldsymbol{\tau}\mid \boldsymbol{f},\boldsymbol{e},d,J,K) &= \left\{(\boldsymbol{\psi},\boldsymbol{\omega}) \in \Xi'(\boldsymbol{f},\boldsymbol{e},\boldsymbol{\tau}|d,J,K) : \sum_{k=1}^{K}\mathbb{I}(\psi_k = 1)\mathbb{I}(\tau_k \le 0) \ge p\right\},\\
\Xi_{\text{FWP}}'(p,q,\boldsymbol{\tau}\mid \boldsymbol{f},\boldsymbol{e},d,J,K) &= \left\{(\boldsymbol{\psi},\boldsymbol{\omega}) \in \Xi'(\boldsymbol{f},\boldsymbol{e},\boldsymbol{\tau}|d,J,K) : \sum_{k=1}^{q}\mathbb{I}(\psi_k = 1) \ge p\right\},
\end{align*}
where $\text{d}(\boldsymbol{\psi},\boldsymbol{\omega}|\boldsymbol{\tau},\boldsymbol{f},\boldsymbol{e},d,J,K)$ is the degeneracy of scenario $(\boldsymbol{\psi},\boldsymbol{\omega})$ given $\boldsymbol{\tau}$, $\boldsymbol{f}$, $\boldsymbol{e}$, $d$, $J$, and $K$. That is, $\text{d}(\boldsymbol{\psi},\boldsymbol{\omega}|\boldsymbol{\tau},\boldsymbol{f},\boldsymbol{e},d,J,K)$ is the number of possible scenarios in $\Xi(\boldsymbol{f},\boldsymbol{e}|d,J,K)$, which could be produced from $(\boldsymbol{\psi},\boldsymbol{\omega})$, that would have equal probability because of the equality of treatment effects in the vector $\boldsymbol{\tau}$.

To evaluate the efficiency improvements made by accounting for any interchangeability amongst the experimental treatments, one can compare the relative sizes of the sets $\Xi(\boldsymbol{f},\boldsymbol{e}|d,J,K)$ and $\Xi'(\boldsymbol{f},\boldsymbol{e},\boldsymbol{\tau}|d,J,K)$. Unfortunately, the values of $|\Xi(\boldsymbol{f},\boldsymbol{e}|d,J,K)|$ and $|\Xi'(\boldsymbol{f},\boldsymbol{e},\boldsymbol{\tau}|d,J,K)|$ do not in general have a simple form. However, in the Appendix we present formulae for their values, along with the $\text{d}(\boldsymbol{\psi},\boldsymbol{\omega}|\boldsymbol{\tau},\boldsymbol{f},\boldsymbol{e},d,J,K)$, when $\boldsymbol{\tau}=\boldsymbol{0}_K$ and when $\boldsymbol{\tau}=\boldsymbol{\delta}_{c,K}$, for the scenario in which early stopping to accept or reject null hypotheses is possible at each analysis (i.e., $f_j,e_j\in\mathbb{R}$ for $j\in\mathbb{N}_J^+$). Therefore as an example, we compare the relative sizes of these sets in this setting when $\boldsymbol{\tau}=\boldsymbol{0}_K$ in Table~\ref{tab1}. It is clear that the value of $|\Xi'(\boldsymbol{f},\boldsymbol{e},\boldsymbol{0}_K|d,J,K)|$ becomes substantially smaller than the corresponding  $|\Xi(\boldsymbol{f},\boldsymbol{e}|d,J,K)|$ as the number of treatments or stages grows. This highlights the importance of utilising this refined approach whenever possible.
	
\def\arraystretch{1.2}
\begin{table}[htb]
	\begin{center}
		\caption{A comparison of the sizes of the sets $\Xi(\boldsymbol{f},\boldsymbol{e}|d,J,K)$ and $\Xi'(\boldsymbol{f},\boldsymbol{e},\boldsymbol{0}_K|d,J,K)$, for several values of $d$, $J$, and $K$.}
		\label{tab1}
		\begin{tabular}{rrrrrrrrr}
			\hline
			& & \multicolumn{3}{c}{$|\Xi(\boldsymbol{f},\boldsymbol{e}|d,J,K)|$} && \multicolumn{3}{c}{$|\Xi'(\boldsymbol{f},\boldsymbol{e},\boldsymbol{0}_K|d,J,K)|$} \\
			\cline{3-5}\cline{7-9}
			$d$ & $K$ & $J=2$ & $J=3$ & $J=4$ && $J=2$ & $J=3$ & $J=4$\\
			\hline
			1 & 2 & 12 & 24 & 40 && 8 & 15 & 24 \\
			2 & 2 & 16 & 36 & 64 && 10 & 21 & 36 \\
			1 & 3 & 34 & 90 & 188 && 13 & 29 & 54 \\
			2 & 3 & 58 & 186 & 428 && 18 & 48 & 100 \\
			3 & 3 & 64 & 216 & 512 && 20 & 56 & 120 \\
			1 & 4 & 96 & 336 & 880 && 19 & 49 & 104\\
			2 & 4 & 200 & 888 & 2608 && 28 & 90 & 220 \\
			3 & 4 & 248 & 1224 & 3808 && 33 & 116 & 300 \\
			4 & 4 & 256 & 1296 & 4096 && 35 & 126 & 330 \\
			\hline
		\end{tabular}
	\end{center}
\end{table}

\subsection{Optimal generalised multi-arm multi-stage designs}\label{optimal}

Stopping boundaries and group sizes providing particular type-I and type-II FWERs can be determined for $abcd$-MAMS trial designs in a variety of ways. An adaptation of the original error spending approach to group sequential trial design (Lan and DeMets, 1983) could be employed, or a functional form could be assumed for the boundaries. In the Appendix, we describe how the triangular test of Whitehead and Stratton (1983) can be modified for $abcd$-MAMS designs. Here, our focus will be on determining optimised stopping boundaries, in order to attempt to maximise trial efficiency.

We suppose that $\sigma_k^2$ and $r_{k,j}$ have been specified for $j\in\mathbb{N}_J^+$ and $k\in\mathbb{N}_K$. Furthermore, we assume we desire to allow early stopping for either efficacy or futility at each analysis, and that values for $a$, $b$, $c$, $d$, $J$, $K$, $\delta$, $\delta_0$, $\alpha$, and $\beta$, have all been designated. Adapting Wason et al. (2012), our goal will be, for $\mathscr{B}'= \left\{ (\boldsymbol{f},\boldsymbol{e})\in\mathbb{R}^J\times\mathbb{R}^J : \forall j\in\mathbb{N}_{J-1}^+\ f_j< e_j,\ f_J=e_J\in\mathbb{R}\right\}$, to find the solution to the following optimisation problem
\begin{equation*}
\argmin_{\{n, (\boldsymbol{f}, \boldsymbol{e})\} \in \mathbb{R}^+\times\mathscr{B}'} w_1\mathbb{E}(N\mid\boldsymbol{0}_K,\boldsymbol{f},\boldsymbol{e},d,J,K,\mathscr{R}_n) + w_2\mathbb{E}(N\mid\boldsymbol{\delta}_{c,K},\boldsymbol{f},\boldsymbol{e},d,J,K,\mathscr{R}_n) + w_3n\sum_{k=0}^Kr_{k,J},
\end{equation*}\label{optcriteria}
subject to
\begin{align*}
FWER_I(a\mid\boldsymbol{0}_K, \boldsymbol{f},\boldsymbol{e},d,J,K,\mathscr{R}_n)&\le \alpha,\\
FWP(b,c\mid \boldsymbol{\delta}_{c,K},\boldsymbol{f},\boldsymbol{e},d,J,K,\mathscr{R}_n)&\ge 1-\beta,
\end{align*}

for $w_1,w_2,w_3\in\mathbb{R}^+\cup\{0\}$. Note that other optimality criteria could be treated similarly. Additionally, in general it is wise to enforce that $w_1+w_2>0$, since the optimal design in the case $w_1=w_2=0$ would be the one converging toward a single-stage design.

The complexity of the above constraints prevents their direct incorporation in to the search procedure. Therefore, we translate our problem to search for the solution to
\begin{small}
	\begin{align*}
	\argmin_{\{n, (\boldsymbol{f}, \boldsymbol{e})\} \in \mathbb{R}^+\times\mathscr{B}'} & w_1\mathbb{E}(N\mid\boldsymbol{0}_K,\boldsymbol{f},\boldsymbol{e},d,J,K,\mathscr{R}_n) + w_2\mathbb{E}(N\mid\boldsymbol{\delta}_{c,K},\boldsymbol{f},\boldsymbol{e},d,J,K,\mathscr{R}_n) + w_3n\sum_{k=0}^Kr_{k,J}\\
	& \qquad + p\left[\mathbb{I}\{FWER_I(a\mid\boldsymbol{0}_K, \boldsymbol{f},\boldsymbol{e},d,J,K,\mathscr{R}_n)>\alpha\}\left\{\frac{FWER_I(a\mid\boldsymbol{0}_K, \boldsymbol{f},\boldsymbol{e},d,J,K,\mathscr{R}_n)-\alpha}{\alpha}\right\} \right.\\
	& \left. \qquad \qquad + \mathbb{I}\{FWP(b,c\mid \boldsymbol{\delta}_{c,K},\boldsymbol{f},\boldsymbol{e},d,J,K,\mathscr{R}_n)>\beta\}\left\{\frac{FWP(b,c\mid \boldsymbol{\delta}_{c,K},\boldsymbol{f},\boldsymbol{e},d,J,K,\mathscr{R}_n)-\beta}{\beta}\right\}\right].
	\end{align*}
\end{small}
We will refer to the function in the above, which we attempt to minimise, as the objective function. Here, $p\in\mathbb{R}^+$, and the final factors therefore penalise designs not conforming to the desired operating characteristics. Following Wason and Jaki (2012), we set $p=N_{\text{fixed}}$, the sample-size required by a corresponding single-stage design.

Note that $n$ is treated in this search as a continuous parameter. This is for convenience given the majority of available software for optimisation is applicable only to continuous search spaces. Denoting the solution to the above problem by $\{n_*,(\boldsymbol{f}_*,\boldsymbol{e}_*)\}$, then, as discussed in Wason (2015), in the likely event that $n_*\notin\mathbb{N}^+$, a choice must be made. Explicitly, define $n_l = \max\{p\in\mathbb{N}_{\lfloor n_* \rfloor} : \forall(j,k)\in\mathbb{N}_J^+\times\mathbb{N}_K \ r_{k,j}p\in\mathbb{N}^+\}$ and $n_u = \min\{p\in\mathbb{N}^+\backslash\mathbb{N}_{\lfloor n_* \rfloor} : \forall(j,k)\in\mathbb{N}_J^+\times\mathbb{N}_K \ r_{k,j}p\in\mathbb{N}^+\}$. The optimal integer $n$ could then either be directly designated as $n_u$, increasing the likelihood that the FWP requirement is met. Or, the optimisation routine could be re-run with $n$ constrained in turn to $n_l$ and $n_u$ (i.e., the optimal $(\boldsymbol{f},\boldsymbol{e})$ for $n\in\{n_l,n_u\}$ should be identified), and the design with the smallest value of the objective function amongst these two solutions determined the true optimal design. This latter procedure is of most importance when $(n_u-n_l)/n_l$ is relatively large, when it is likely that the performance of the constrained optimal designs will be substantially different.

Finally, whilst the above specifies how a search can be performed, it does not enforce that a particular search algorithm be utilised. However, given the search space is likely to contain many local optima, a stochastic search routine is preferable (Wason and Jaki, 2012). Here, we utilise the GA package in R (Scrucca, 2013; Scrucca, 2016), which allows the parallel evaluation of several candidate designs. In combination with the use of the package mvtnorm (Genz et al., 2016) for evaluating the requisite multivariate normal integrals, this allow for the efficient determination of optimised designs. Of importance is that GA requires the designation of values for several control parameters. We discuss our choices for these in the Appendix. Code to replicate our results is available upon request.

\section{Example: TAILoR}\label{tailor}

To explore the effect the parameters $a$, $b$, $c$, and $d$ can have upon a designs operating characteristics, we re-consider the TAILoR (TelmisArtan and InsuLin Resistance in HIV) trial (Magirr et al., 2012). In this trial, three experimental treatments, corresponding to different doses of Telmisartan, were compared to a shared control. Therefore, we set $K=3$, and as in Magirr et al. (2012), we suppose that $\alpha=0.05$, $\beta=0.1$, $\delta=0.545$, $\delta_0=0.138$, and $\sigma_k^2=1$ for $k\in\mathbb{N}_K$. To allow utilisation of Equations~(\ref{fwer2})-(\ref{ess2}), we further suppose that $r_{k,j}=j$ for $j\in\mathbb{N}_J^+$ and $k\in\mathbb{N}_K$ (implying $n_l=\lfloor n_*\rfloor$ and $n_u=\lceil n_*\rceil$).

In Table~\ref{taba1} and Table~\ref{taba2}, we summarise the operating characteristics of determined optimal designs for $J=2$ and $a=2$, considering all possible combinations of $b$, $c$, and $d$ such that $b\le c$, when $w_1=w_2=w_3=1/3$. Corresponding results for the cases with $a=1$ and $a=3$ are given in the Appendix. In all instances, the optimal design was determined for each parameter set by running ten replicate searches from randomised starting points, and choosing amongst them the design with the smallest value of the objective function. For simplicity, the optimal integer $n$ was chosen by setting $n=n_u$.

In general, it is clear that each of the presented optimal designs controls the operating characteristics to the desired level. Perhaps surprisingly, from Table~\ref{taba1} we can see that increasing the value of $d$, with all other design parameters fixed, usually results in a design with a lower value for $FWER_I(1\mid\boldsymbol{0}_K,\boldsymbol{f},\boldsymbol{e},d,J,K,\mathscr{R}_n)$. For example, in the cases with $b=c=1$, the design with $d=1$ has $FWER_I(1\mid\boldsymbol{0}_K,\boldsymbol{f},\boldsymbol{e},d,J,K,\mathscr{R}_n)=0.233$, but with $d=3$, this drops to 0.184. Closer inspection reveals this is likely a consequence of the fact that larger values of $d$ typically result in larger values for $e_1$.

The value of $\mathbb{E}(N\mid\boldsymbol{\tau},\boldsymbol{f},\boldsymbol{e},d,J,K,\mathscr{R}_n)$ appears relatively insensitive to the choice of $d$ for fixed $b$ and $c$. In the case with $b=c=3$ for example, $\mathbb{E}(N\mid\boldsymbol{0}_K,\boldsymbol{f},\boldsymbol{e},d,J,K,\mathscr{R}_n)\in[216.3,220.9]$ no matter the choice of $d\in\mathbb{N}_3^+$. However, its value is highly dependent upon the choice for $b$ and $c$, as would be expected.

In Table~\ref{taba2}, it can be seen that the various FWP vary widely depending on the values for $b$, $c$, and $d$. In general, $FWP(p,q\mid \boldsymbol{\delta}_{r,K},\boldsymbol{f},\boldsymbol{e},d,J,K,\mathscr{R}_n)$ increases as $q$ is increased with all other parameters fixed. Larger values of $d$ also usually result in increased values for each of the presented FWPs, with all other parameters fixed. This follows the findings of Urach and Posch (2016). However, from Table~\ref{taba1}, this comes at a cost to the associated ESS. For example, when $b=1$ and $c=2$, the design with $d=1$ has $\mathbb{E}(N\mid\boldsymbol{\delta}_{3,K},\boldsymbol{f},\boldsymbol{e},d,J,K,\mathscr{R}_n)=79.6$. This increases by 49\% when $d=3$ to 118.5.

In Tables~\ref{taba3}-\ref{taba6}, we observe that the findings for the cases $a=1$ and $a=3$ are similar to those discussed here for $a=2$.

\def\arraystretch{1.2}
\begin{sidewaystable}[htbp]
	\begin{center}
		\caption{A comparison of the performance of several optimal designs when $a=2$, $K=3$, $\alpha=0.05$, $\beta=0.1$, $\delta=0.545$, $\delta_0=0.138$, $\sigma_k^2=1$ and $r_{k,j}=j$ for $j\in\mathbb{N}_J^+$ and $k\in\mathbb{N}_K$. Rejection probabilities are given to three decimal places, whilst expected sample sizes are given to one decimal place. For brevity,  $FWER_I(p\mid\boldsymbol{0}_K, \boldsymbol{f},\boldsymbol{e},d,J,K,\mathscr{R}_n)\equiv FWER_I(p)$, and $\mathbb{E}(N\mid\boldsymbol{\tau},\boldsymbol{f},\boldsymbol{e},d,J,K,\mathscr{R}_n)\equiv \mathbb{E}(N\mid\boldsymbol{\tau})$.}
		\label{taba1}
		\begin{tabular}{rrrrrrrrrrrrr}
			\hline
			$b$ & $c$ & $d$ & $n$ & $\boldsymbol{f}^\top$ & $\boldsymbol{e}^\top$ & $FWER_I(1)$ & $FWER_I(2)$ & $FWER_I(3)$ & $\mathbb{E}(N\mid\boldsymbol{0}_K)$ & $\mathbb{E}(N\mid\boldsymbol{\delta}_{1,K})$ & $\mathbb{E}(N\mid\boldsymbol{\delta}_{2,K})$ & $\mathbb{E}(N\mid\boldsymbol{\delta}_{3,K})$ \\
			\hline
			1 & 1 & 1 & 27 & $(0.08, 1.31)$ & $(1.70, 1.31)$ & 0.233 & 0.050 & 0.008 & 154.7 & 134.9 & 126.6 & 123.0 \\
			1 & 1 & 2 & 25 & $(-0.01, 1.39)$ & $(2.40, 1.39)$ & 0.183 & 0.050 & 0.010 & 156.0 & 172.1 & 166.0 & 162.4 \\
			1 & 1 & 3 & 26 & $(0.26, 1.38)$ & $(2.19, 1.38)$ & 0.184 & 0.050 & 0.011 & 150.7 & 170.6 & 170.5 & 167.7 \\
			1 & 2 & 1 & 16 & $(0.34, 1.24)$ & $(1.72, 1.24)$ & 0.237 & 0.050 & 0.008 & 85.5 & 83.8 & 81.1 & 79.6 \\
			1 & 2 & 2 & 17 & $(0.17, 1.39)$ & $(2.12, 1.39)$ & 0.184 & 0.050 & 0.009 & 100.5 & 111.3 & 110.0 & 109.2 \\
			1 & 2 & 3 & 17 & $(0.39, 1.33)$ & $(2.42, 1.33)$ & 0.184 & 0.050 & 0.011 & 95.3 & 111.2 & 115.3 & 118.5 \\
			1 & 3 & 1 & 12 & $(0.42, 1.21)$ & $(1.75, 1.21)$ & 0.238 & 0.050 & 0.008 & 63.0 & 63.9 & 62.9 & 62.3 \\
			1 & 3 & 2 & 13 & $(0.19, 1.39)$ & $(2.13, 1.39)$ & 0.184 & 0.050 & 0.009 & 76.6 & 85.1 & 85.6 & 86.1 \\
			1 & 3 & 3 & 13 & $(0.19, 1.38)$ & $(2.21, 1.38)$ & 0.184 & 0.050 & 0.011 & 76.7 & 86.4 & 88.7 & 90.4 \\
			2 & 2 & 1 & 31 & $(-0.01, 1.38)$ & $(11.00, 1.38)$ & 0.182 & 0.050 & 0.011 & 194.1 & 229.2 & 237.9 & 246.5 \\
			2 & 2 & 2 & 32 & $(-0.01, 1.39)$ & $(2.44, 1.39)$ & 0.183 & 0.050 & 0.010 & 199.9 & 219.4 & 205.3 & 197.7 \\
			2 & 2 & 3 & 31 & $(-0.09, 1.41)$ & $(2.22, 1.41)$ & 0.183 & 0.050 & 0.011 & 196.9 & 212.8 & 206.7 & 196.3 \\
			2 & 3 & 1 & 22 & $(0.05, 1.37)$ & $(12.84, 1.37)$ & 0.182 & 0.050 & 0.011 & 135.8 & 159.6 & 166.6 & 173.3 \\
			2 & 3 & 2 & 22 & $(-0.16, 1.44)$ & $(2.01, 1.44)$ & 0.184 & 0.050 & 0.009 & 141.0 & 147.9 & 138.6 & 133.0 \\
			2 & 3 & 3 & 22 & $(0.02, 1.39)$ & $(2.32, 1.39)$ & 0.183 & 0.050 & 0.011 & 136.3 & 151.8 & 152.3 & 151.3 \\
			3 & 3 & 1 & 34 & $(-0.12, 1.38)$ & $(9.00, 1.38)$ & 0.182 & 0.050 & 0.011 & 219.0 & 254.9 & 263.1 & 271.1 \\
			3 & 3 & 2 & 34 & $(-0.07, 1.38)$ & $(13.38, 1.38)$ & 0.182 & 0.050 & 0.011 & 216.3 & 253.7 & 262.4 & 271.0 \\
			3 & 3 & 3 & 35 & $(-0.08, 1.43)$ & $(2.04, 1.43)$ & 0.184 & 0.050 & 0.011 & 220.9 & 234.0 & 222.5 & 204.3 \\
			\hline
			\end{tabular}
	\end{center}
\end{sidewaystable}

\def\arraystretch{1.2}
\begin{sidewaystable}[htbp]
	\begin{center}
		\caption{A comparison of the performance of several optimal designs when $a=2$, $K=3$, $\alpha=0.05$, $\beta=0.1$, $\delta=0.545$, $\delta_0=0.138$, $\sigma_k^2=1$ and $r_{k,j}=j$ for $j\in\mathbb{N}_J^+$ and $k\in\mathbb{N}_K$. Rejection probabilities are given to three decimal places. For brevity, $FWP(p,q\mid \boldsymbol{\delta}_{r,K},\boldsymbol{f},\boldsymbol{e},d,J,K,\mathscr{R}_n)\equiv FWP(p,q\mid r)$.}
		\label{taba2}
		\begin{tabular}{rrrrrrrrr}
			\hline
			$b$ & $c$ & $d$ & $FWP(1,1\mid 1)$ & $FWP(1,2\mid 2)$ & $FWP(1,3\mid 3)$ & $FWP(2,2\mid 2)$ & $FWP(2,3\mid 3)$ & $FWP(3,3\mid 3)$ \\
			\hline
			1 & 1 & 1 & 0.909 & 0.977 & 0.991 & 0.598 & 0.760 & 0.445 \\
			1 & 1 & 2 & 0.901 & 0.969 & 0.985 & 0.826 & 0.936 & 0.615 \\
			1 & 1 & 3 & 0.902 & 0.969 & 0.986 & 0.834 & 0.936 & 0.782 \\
			1 & 2 & 1 & 0.763 & 0.903 & 0.948 & 0.425 & 0.600 & 0.272 \\
			1 & 2 & 2 & 0.785 & 0.906 & 0.946 & 0.655 & 0.826 & 0.446 \\
			1 & 2 & 3 & 0.779 & 0.901 & 0.943 & 0.656 & 0.818 & 0.576 \\
			1 & 3 & 1 & 0.672 & 0.839 & 0.904 & 0.343 & 0.513 & 0.203 \\
			1 & 3 & 2 & 0.697 & 0.844 & 0.900 & 0.542 & 0.730 & 0.361 \\
			1 & 3 & 3 & 0.699 & 0.845 & 0.901 & 0.554 & 0.731 & 0.466 \\
			2 & 2 & 1 & 0.945 & 0.986 & 0.994 & 0.903 & 0.970 & 0.869 \\
			2 & 2 & 2 & 0.949 & 0.988 & 0.995 & 0.904 & 0.973 & 0.672 \\
			2 & 2 & 3 & 0.944 & 0.986 & 0.994 & 0.902 & 0.970 & 0.869 \\
			2 & 3 & 1 & 0.870 & 0.953 & 0.976 & 0.785 & 0.908 & 0.724 \\
			2 & 3 & 2 & 0.866 & 0.952 & 0.976 & 0.769 & 0.905 & 0.526 \\
			2 & 3 & 3 & 0.869 & 0.953 & 0.976 & 0.784 & 0.907 & 0.723 \\
			3 & 3 & 1 & 0.960 & 0.991 & 0.997 & 0.928 & 0.980 & 0.903 \\
			3 & 3 & 2 & 0.959 & 0.991 & 0.997 & 0.927 & 0.980 & 0.901 \\
			3 & 3 & 3 & 0.961 & 0.991 & 0.997 & 0.930 & 0.981 & 0.905 \\
			\hline
		\end{tabular}
	\end{center}
\end{sidewaystable}

\section{Discussion}\label{disc}

In this article, we have presented methodology for the design of MAMS experiments with control of generalised error-rates and more flexible stopping rules. By developing an efficient approach to the evaluation of the statistical operating characteristics of any proposed design, we were able to determine numerous example optimised designs. Indeed, it was clear that by taking into account interchangeability between experimental arms with equal allocations, variances in response, and treatment effects, the number of multivariate normal integrals that must be considered to evaluate a designs performance could be substantially reduced (Table~\ref{tab1}). Combining this with parallel design evaluation using the GA package, optimal designs were readily identified. All calculations were performed on a Macbook Pro with an i7 processor across seven cores. Execution time for each of the 540 optimal design runs ranged from approximately ten to 15 minutes. By utilising a high performance computer, this could be reduced even further. While non-optimal boundaries, such as those based on the triangular test, could be identified extremely easily.

Computationally, the complexity of our method is comparable to that of Ghosh et al. (2017), as we also employ recently developed algorithms for the fast evaluation of multivariate normal integrals (Genz et al., 2016). Here, we view our method as preferable over their approach because of our requirement to evaluate ESSs in Section~\ref{optimal}, with the methodology in Ghosh et al. (2017) specialised to the determination of the conventional type-I FWER in the case $d=1$. It is also preferential to a simulation based approach, such as that taken by Wason and Jaki (2012), as small Monte Carlo errors can be attained without the need for large numbers of replicate simulations.

Optimal design determination is possible for many $J$ and $K$ of practical interest. Of course, this may not be the case as these parameters increase. In this instance, extending the approach of Wason (2015), which reduces the dimensionality of the optimisation problem, would likely be an effective means of retaining efficient design operating characteristics, with tractable design determination attained.

There are also limitations to the use of a stochastic search procedure for all $J$ and $K$ that must be recognised. Firstly, there is no guarantee that the search will converge to the global optimial design. Evaluation with multiple randomised starting values is therefore typically important (Wason and Jaki, 2012). Likewise, it is vital to make appropriate choices for the algorithms control parameters, as discussed in the Appendix. Moreover, for these reasons, investigating the performance of a dynamic programming approach, similar to the method of Barber and Jennison (2002) for two-arm group sequential trials, could be helpful. It is not currently clear how well this approach would perform in the case of multiple arms.

Whilst MAMS methodology is applicable in many experimental design settings, as was discussed, much motivation for its development has come from the desire to improve the efficiency of clinical research. As was noted by Ghosh et al. (2017), it may be useful in the design of platform trials, for example. However, we believe the methodology presented here would be of greatest use in the design of phase II clinical trials, the non-confirmatory nature of which means strong control of the conventional type-I FWER may not always be required (Wason et al., 2014), and in which interest in randomised designs is increasing but is hampered by conventional sample size requirements. Considering the performance of the designs presented in Tables~\ref{taba1}-\ref{taba6}, it is clear controlling generalised error-rates may be able to ameliorate this issue. For example, the optimal 1111-MAMS design required a group size of 40. In contrast, the 1131-MAMS and 2132-MAMS designs required group sizes of only 24 and 13 respectively.

Finally, following the arguments of Jaki and Magirr (2013), our results presented here are applicable to a variety of other types of outcome variable. However, as noted in Magirr et al. (2012), application to a scenario with normally distributed outcome variables of unknown variance is not straightforward. In this case, for multi-arm trials, a quantile substitution procedure has been considered for boundary specification (Wason et al., 2015). Additionally, an approach based on Monte Carlo simulation, for direct determination of optimal group sizes and stopping boundaries, has also been presented (Grayling et al., 2017).

\section{Appendix A}\label{appA}

\subsection{Sample spaces}

In Section~\ref{desdet}, we introduced the sample space for the random pair $(\boldsymbol{\Psi},\boldsymbol{\Omega})$, which we defined by $\Xi(\boldsymbol{f},\boldsymbol{e}|d,J,K)$. Precisely, this is
\begin{equation*}\label{xiEF}
\begin{split}
\Xi(\boldsymbol{f},\boldsymbol{e}|d,J,K) &=\left[(\boldsymbol{\psi},\boldsymbol{\omega})\in(\mathbb{N}_1)^K\times(\mathbb{N}_J^+)^K : \forall j\in\mathbb{N}_J^+\  \sum_{k=1}^{K}\mathbb{I}(\psi_k=1)\mathbb{I}(\omega_k \le j) \ge d \Longrightarrow \sum_{k=1}^{K}\mathbb{I}(\omega_k > j) = 0, \right.\\
&\qquad \qquad \left. \forall k\in\mathbb{N}_K^+\ (\psi_k=1,\omega_k=j)\Longrightarrow e_j\neq\infty,\right.\\ & \qquad \qquad \qquad \left. \forall k\in\mathbb{N}_K^+\ (\psi_k=0,\omega_k<J)\Longrightarrow (f_j\neq-\infty)\ \cup\right.\\
&\qquad\qquad\qquad\qquad \left. \left\{\sum_{l\neq k}\mathbb{I}(\omega_l>j)=0, \sum_{l\neq k}\mathbb{I}(\psi_l=1)\mathbb{I}(\omega_l=j)>0 \right\} \right].
\end{split}
\end{equation*}
The conditions here account for the fact that the trial terminates if any $d$ null hypotheses have been rejected, that one can only reject a null hypothesis at an analysis with a non-infinite rejection boundary, and that the pair $(\psi_k,\omega_k)=(0,j)$ is only possible if the acceptance boundary at stage $j$ is non-infinite or this is the analysis at which the trial was terminated due to other hypotheses having been rejected.

We now proceed to derive formulae for $|\Xi(\boldsymbol{f},\boldsymbol{e}|d,J,K)|$ in the case that $(\boldsymbol{f},\boldsymbol{e})\in\mathscr{B}'$. To this end, define
\[ \Xi_j(\boldsymbol{f},\boldsymbol{e}|d,J,K) = \left\{ (\boldsymbol{\psi},\boldsymbol{\omega})\in\Xi(\boldsymbol{f},\boldsymbol{e}|d,J,K) : \max_k \omega_k=j \right\}, \qquad j\in\mathbb{N}_J^+. \]
Then,
\[ \Xi(\boldsymbol{f},\boldsymbol{e}|d,J,K)=\bigcup_{j=1}^J \Xi_j(\boldsymbol{f},\boldsymbol{e}|d,J,K), \]
and
\[ |\Xi(\boldsymbol{f},\boldsymbol{e}|d,J,K)|=\sum_{j=1}^J|\Xi_j(\boldsymbol{f},\boldsymbol{e}|d,J,K)|.\]
We have, when $(\boldsymbol{f},\boldsymbol{e})\in\mathscr{B}'$
\begin{align*}
|\Xi_1(\boldsymbol{f},\boldsymbol{e}|d,J,K)|&=2^K,\\
|\Xi_2(\boldsymbol{f},\boldsymbol{e}|d,J,K)|&=\sum_{k=0}^{K-1}\sum_{r=0}^{\min(k,d-1)}\binom{k}{r} \binom{k}{k-r}2^{K-k},\\
|\Xi_j(\boldsymbol{f},\boldsymbol{e}|d,J,K)|&=\sum_{k_1=0}^{K-1}\ \sum_{k_2=0}^{K-k_1-1}\dots\sum_{k_{j-1}=0}^{K-\sum_{l=1}^{j-2}k_l-1}\ \sum_{r_1=0}^{\min(k_1,d-1)}\ \sum_{r_2=0}^{\min(k_2,d-r_1-1)}\ \dots\\
& \qquad \qquad \dots \sum_{r_{j-1}=0}^{\min(k_{j-1},d-\sum_{l=1}^{j-2}r_l-1)}\ \prod_{l=1}^{j-1}\binom{k_l}{r_l}\binom{k_l}{k_l-r_l}2^{K-\sum_{l=1}^{j-1}k_l}, \qquad j\in\mathbb{N}_J\backslash\mathbb{N}_2.
\end{align*}
These formulae arise from considering the possible combinations of experimental treatments that could be dropped from the trial at each stage, along with which of these had their null hypotheses rejected.

Next, we perform similar derivations for $\Xi'(\boldsymbol{f},\boldsymbol{e},\boldsymbol{\tau}|d,J,K)$, again in the case $(\boldsymbol{f},\boldsymbol{e})\in\mathscr{B}'$. First however, we motivate our definition of $\Xi'(\boldsymbol{f},\boldsymbol{e},\boldsymbol{\tau}|d,J,K)$. Note that $2\omega_{k}-\mathbb{I}(\psi_{k}=0)\in\mathbb{N}_{2J}^+$ provides a unique value that describes the outcome for treatment $k$. That is, it maps the 2-dimensional description given by $(\psi_k,\omega_k)$ to a 1-dimensional description. This allows it to be used for ordering the outcomes for interchangeable interventions. Whilst the use of the ordering itself is simply to prevent repeated counting of events with equal probability. A simple example of this would be in assessing the sum of the scores on three rolled dice. If the dice are not distinguishable in any way, we may order the outcomes to avoid, for example, consideration of the events $\{1,5,3\}$, $\{3,1,5\}$, $\{3,5,1\}$, $\{5,1,3\}$ and $\{5,3,1\}$, as well as the ordered event $\{1,3,5\}$.

Now, following the above, set
\[ \Xi'_j(\boldsymbol{f},\boldsymbol{e},\boldsymbol{\tau}|d,J,K) = \left\{ (\boldsymbol{\psi},\boldsymbol{\omega})\in\Xi'(\boldsymbol{f},\boldsymbol{e},\boldsymbol{\tau}|d,J,K) : \max_k \omega_k=j \right\}, \qquad j\in\mathbb{N}_J^+. \]
Then
\[ \Xi'(\boldsymbol{f},\boldsymbol{e},\boldsymbol{\tau}|d,J,K)=\bigcup_{j=1}^J \Xi'_j(\boldsymbol{f},\boldsymbol{e},\boldsymbol{\tau}|d,J,K), \]
and
\[ |\Xi'(\boldsymbol{f},\boldsymbol{e},\boldsymbol{\tau}|d,J,K)|=\sum_{j=1}^J|\Xi'_j(\boldsymbol{f},\boldsymbol{e},\boldsymbol{\tau}|d,J,K)|.\]
For $\boldsymbol{\tau}=\boldsymbol{0}_K$
\begin{align*}
|\Xi'_1(\boldsymbol{f},\boldsymbol{e},\boldsymbol{0}_K|d,J,K)|&=K+1,\\
|\Xi'_2(\boldsymbol{f},\boldsymbol{e},\boldsymbol{0}_K|d,J,K)|&=\sum_{k=0}^{K-1}\sum_{r=0}^{\min(k,d-1)}(K-k+1),\\
|\Xi'_j(\boldsymbol{f},\boldsymbol{e},\boldsymbol{0}_K|d,J,K)|&=\sum_{k_1=0}^{K-1}\ \sum_{k_2=0}^{K-k_1-1}\dots\sum_{k_{j-1}=0}^{K-\sum_{l=1}^{j-2}k_l-1}\ \sum_{r_1=0}^{\min(k_1,d-1)}\ \sum_{r_2=0}^{\min(k_2,d-r_1-1)}\ \dots\\
& \qquad \dots\sum_{r_{j-1}=0}^{\min(k_{j-1},d-\sum_{l=1}^{j-2}r_l-1)}\ \left(K - \sum_{l=1}^{j-1}k_l + 1\right), \qquad j\in\mathbb{N}_J\backslash\mathbb{N}_2.
\end{align*}	
Moreover, when $\boldsymbol{\tau}=\boldsymbol{\delta}_{c,K}$
\begin{align*}
|\Xi'_1(\boldsymbol{f},\boldsymbol{e},\boldsymbol{\delta}_{c,K}|d,J,K)|&=(c+1)(K-c+1),\\
|\Xi'_2(\boldsymbol{f},\boldsymbol{e},\boldsymbol{\delta}_{c,K}|d,J,K)|&=\sum_{k_E=0}^{c}\sum_{k_F=0}^{\min(K-k_E-1,K-c)}\sum_{r_E=0}^{\min(k_E,d-1)}\sum_{r_F=0}^{\min(k_F,d-r_E-1)}(c-k_E+1)(K-c-k_F+1),\\
|\Xi'_j(\boldsymbol{f},\boldsymbol{e},\boldsymbol{\delta}_{c,K}|d,J,K)|&=\sum_{k_{E1}=0}^{c}\sum_{k_{F1}=0}^{\min(K-k_{E1}-1,K-c)}\dots\sum_{k_{Ej-1}=0}^{c-\sum_{l=1}^{j-2}k_El}\ \sum_{k_{Fj-1}=0}^{\min(K-\sum_{l=1}^{j-2}k_{El}-1,K-\sum_{l=1}^{j-2}k_{Fl}-c)}\\
& \qquad \sum_{r_{E1}=0}^{\min(k_{E1},d-1)}\ \sum_{r_{F1}=0}^{\min(k_{F1},d-r_{E1}-1)}\dots\sum_{r_{Ej-1}=0}^{\min(k_{Ej-1},d-\sum_{l=1}^{j-2}(r_{El}+r_{Fl})-1)}\\
& \qquad \qquad \sum_{r_{Fj-1}=0}^{\min(k_{Fj-1},d-\sum_{l=1}^{j-2}(r_{El}+r_{Fl})-r_{Ej-1}-1)}\left(c-\sum_{l=1}^{j-1}k_{El}+1\right)\\
& \qquad \qquad \qquad \qquad \qquad \qquad\qquad \qquad\times \left(K-c-\sum_{l=1}^{j-1}k_{Fl}+1\right), \qquad j\in\mathbb{N}_J\backslash\mathbb{N}_2.
\end{align*}
These formulae arise by adapting those given above for $\Xi(\boldsymbol{f},\boldsymbol{e}|d,J,K)$ to take in to account the ordering placed on the outcomes for the experimental treatment arms.	

Finally
\begin{align*}
\text{d}(\boldsymbol{\psi},\boldsymbol{\omega}|\boldsymbol{0}_K,\boldsymbol{f},\boldsymbol{e},d,J,K) &= |\{2\omega_1-\mathbb{I}(\psi_1=0),\dots,2\omega_K-\mathbb{I}(\psi_K=0)\}_{\neq}|!,\\
\text{d}(\boldsymbol{\psi},\boldsymbol{\omega}|\boldsymbol{\delta}_{c,K},\boldsymbol{f},\boldsymbol{e},d,J,K) &= |\{2\omega_1-\mathbb{I}(\psi_1=0),\dots,2\omega_c-\mathbb{I}(\psi_c=0)\}_{\neq}|!\\
& \qquad \times |\{2\omega_{c+1}-\mathbb{I}(\psi_{c+1}=0),\dots,2\omega_K-\mathbb{I}(\psi_K=0)\}_{\neq}|!,
\end{align*}
where $S_{\neq}$ denotes the set of unique elements belonging to the set $S$.

\subsection{Modified triangular test}

Suppose that values for $a$, $b$, $c$, $d$, $J$, $K$, $\delta$, $\delta_0$, $\alpha$, and $\beta$, have all been designated, and that $r_{k,j}=j$ for $j\in\mathbb{N}_J^+$ and $k\in\mathbb{N}_K$. Additionally, assume that $\sigma_0^2$ and $\sigma_1^2$ have been specified, and $\sigma_{1}^2=\dots=\sigma_K^2$. The boundaries and group size of a modified triangular test can then be determined by finding the solution to the following optimisation problem
\begin{align*}
\argmin_{(\alpha',\beta') \in (0,1)^2} &[FWER_I\{a\mid\boldsymbol{0}_K, \boldsymbol{f}(\alpha',\beta'),\boldsymbol{e}(\alpha',\beta'),d,J,K,\mathscr{R}_{n(\alpha',\beta')}\} - \alpha]^2\\ & \qquad + [FWP\{b,c\mid \boldsymbol{\delta}_{c,K},\boldsymbol{f}(\alpha',\beta'),\boldsymbol{e}(\alpha',\beta'),d,J,K,\mathscr{R}_{n(\alpha',\beta')}\} - \{1-\beta\}]^2,
\end{align*}\label{triang}
where
\begin{align*}
\tilde{\delta} &= \frac{2\Phi^{-1}(1-\alpha')}{\Phi^{-1}(1-\alpha')+\Phi^{-1}(1-\beta')}\delta,\\
\boldsymbol{I} &= (I_1,\dots,I_J)^\top = (1,\dots,J)^\top\left\{\sqrt{\frac{4(0.583)^2}{J} + 8\log\left(\frac{1}{2\alpha'}\right)} - \frac{2(0.583)}{\sqrt{J}}\right\}^2\frac{1}{J\tilde{\delta}}\\
\boldsymbol{f}(\alpha',\beta') &= -\frac{2}{\tilde{\delta}}\log\left(\frac{1}{2\alpha'}\right) + 0.583\sqrt{\frac{I_J}{J}}+\frac{3\tilde{\delta}}{4}\frac{I_J}{J}(1,\dots,J)^\top\circ(I_1^{-1/2},\dots,I_J^{-1/2})^\top,\\
\boldsymbol{e}(\alpha',\beta') &= \frac{2}{\tilde{\delta}}\log\left(\frac{1}{2\alpha'}\right) - 0.583\sqrt{\frac{I_J}{J}}+\frac{\tilde{\delta}}{4}\frac{I_J}{J}(1,\dots,J)^\top\circ(I_1^{-1/2},\dots,I_J^{-1/2})^\top,\\
n(\alpha',\beta') &= (\sigma_0^2+\sigma_1^2)I_1^{-1}.
\end{align*}
Explicitly, a two-dimensional numerical optimisation procedure can be used to find the $(\alpha',\beta')\in(0,1)^2$ such that the objective function above is equal to zero. Then, a group size of $n(\alpha',\beta')$, accompanied by stopping boundaries $\boldsymbol{e}(\alpha',\beta')$ and $\boldsymbol{f}(\alpha',\beta')$, provides a triangular test design with the desired statistical operating characteristics.

\subsection{GA control parameters}

The function \texttt{ga}, from the package GA, utilised in searching for optimised generalised MAMS designs in this article, allows a large number of inputs controlling the algorithms conduct to be specified. In theory, each of these could be tailored for improved reliability and speed in converging toward the global optimiser. In reality, this is not necessary for the majority of inputs, as default values routinely lead to designs with impressive operating characteristics. However, of particular importance are the inputs \texttt{popSize} and \texttt{maxiter}. The former designates the number of candidate designs considered in each iteration of the algorithm, and the latter the maximum number of iterations to allow. Setting either value too low would often result in non-convergence. Values too high though lead to undesirable run times.

To explore possible sensible choices for these values, we examined convergence for several of design scenarios. Explicitly, we set $\alpha=0.05$, $\beta=0.1$, $\delta=0.545$, $\delta_0=0.138$, $\sigma_k^2=1$ and $r_{k,j}=j$ for $j\in\mathbb{N}_J^+$ and $k\in\mathbb{N}_K$, as in Section~\ref{tailor}, with $a=b=c=d=1$. Then, we considered three scenarios defined by $(J,K)\in\{(2,2),(2,3),(3,2)\}^2$. Ten replicate searches were performed for each scenario with random starting values, with \texttt{popSize} equal to 50, 100, or 200, and \texttt{maxiter} equal to 1000. The value of the objective function of the best design at each iteration, for each replicate, was recorded. These are plotted as a function of the iteration number in Figure 1. It is clear that in each case little reduction in the objective function for the best identified design typically occurs beyond iteration 500. In addition, increasing the value of \texttt{popSize} leads to a reduction in the number of iterations typically required to converge to the final returned solution. Therefore, it would seem preferable to in general perform additional replicate searches from random starting points, rather than to consider increasing \texttt{popSize} or \texttt{maxiter}.

Based on these findings, in the examples presented in this article, we set \texttt{popSize} as 100, and \texttt{maxiter} as 500, and performed ten replicate searches. This appeared adequate for identifying the (approximately) optimal design. Nonetheless, caution should be taken in assuming such values will universally be acceptable.

\begin{figure}[htb]
	\begin{center}\label{fig1}
		\includegraphics[width=14cm]{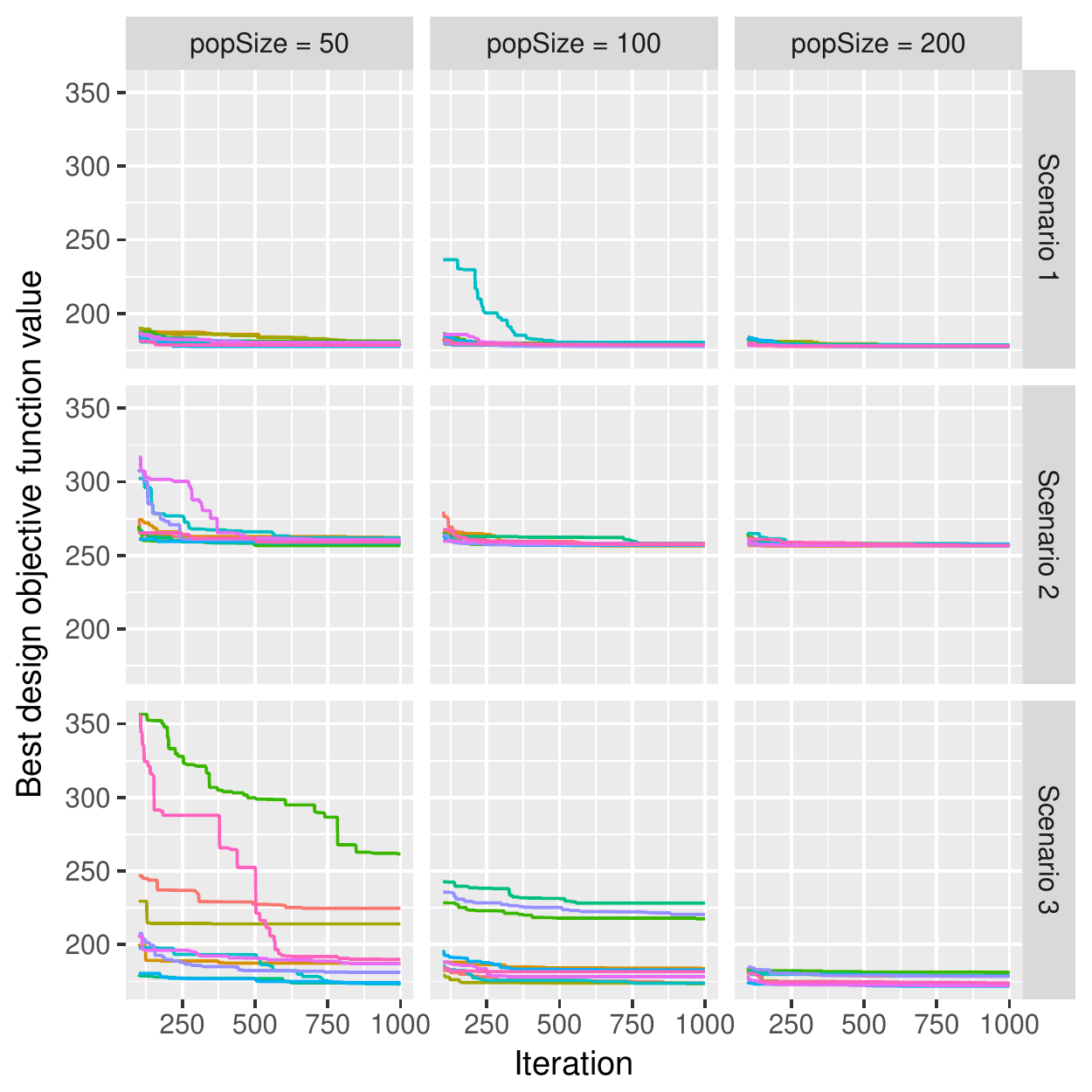}
		\caption{The convergence of the GA algorithm is explored for several scenarios, and several values of \texttt{popSize}. The value of the objective function of the best design at each iteration of ten replicate executions of the algorithm, for each parameter set, are shown. In Scenario 1 $K=2$ and $J=2$, in Scenario 2 $K=3$ and $J=2$, and in Scenario 3 $K=2$ and $J=3$.}
	\end{center}
\end{figure}

\subsection{Additional optimal designs}

Identified optimal designs in the cases with $a=1$ and $a=3$ are presented, with all other parameters specified as in Section~\ref{tailor}.

\def\arraystretch{1.2}
\begin{sidewaystable}[htbp]
	\begin{center}
		\caption{A comparison of the performance of several optimal designs when $a=1$, $K=3$, $\alpha=0.05$, $\beta=0.1$, $\delta=0.545$, $\delta_0=0.138$, $\sigma_k^2=1$ and $r_{k,j}=j$ for $j\in\mathbb{N}_J^+$ and $k\in\mathbb{N}_K$. Rejection probabilities are given to three decimal places, whilst expected sample sizes are given to one decimal place. For brevity,  $FWER_I(p\mid\boldsymbol{0}_K, \boldsymbol{f},\boldsymbol{e},d,J,K,\mathscr{R}_n)\equiv FWER_I(p)$, and $\mathbb{E}(N\mid\boldsymbol{\tau},\boldsymbol{f},\boldsymbol{e},d,J,K,\mathscr{R}_n)\equiv \mathbb{E}(N\mid\boldsymbol{\tau})$.}
		\label{taba3}
		\begin{tabular}{rrrrrrrrrrrrr}
			\hline
			$b$ & $c$ & $d$ & $n$ & $\boldsymbol{f}^\top$ & $\boldsymbol{e}^\top$ & $FWER_I(1)$ & $FWER_I(2)$ & $FWER_I(3)$ & $\mathbb{E}(N\mid\boldsymbol{0}_K)$ & $\mathbb{E}(N\mid\boldsymbol{\delta}_{1,K})$ & $\mathbb{E}(N\mid\boldsymbol{\delta}_{2,K})$ & $\mathbb{E}(N\mid\boldsymbol{\delta}_{3,K})$ \\
			\hline
			1 & 1 & 1 & 40 & $(0.52, 2.06)$ & $(2.85, 2.06)$ & 0.050 & 0.006 & 0.001 & 216.3 & 232.3 & 225.6 & 223.6 \\
			1 & 1 & 2 & 39 & $(0.53, 2.05)$ & $(3.01, 2.05)$ & 0.050 & 0.008 & 0.001 & 211.2 & 260.2 & 259.5 & 261.1 \\
			1 & 1 & 3 & 39 & $(0.58, 2.04)$ & $(3.11, 2.04)$ & 0.050 & 0.008 & 0.001 & 208.6 & 260.8 & 270.3 & 277.1 \\
			1 & 2 & 1 & 29 & $(0.84, 2.03)$ & $(2.86, 2.03)$ & 0.050 & 0.006 & 0.001 & 144.2 & 168.9 & 171.6 & 174.7 \\
			1 & 2 & 2 & 28 & $(0.63, 2.05)$ & $(2.90, 2.05)$ & 0.050 & 0.008 & 0.001 & 147.4 & 182.3 & 188.0 & 193.1 \\
			1 & 2 & 3 & 28 & $(0.67, 2.04)$ & $(2.98, 2.04)$ & 0.050 & 0.008 & 0.001 & 146.0 & 182.6 & 192.5 & 200.8 \\
			1 & 3 & 1 & 24 & $(0.72, 2.08)$ & $(2.61, 2.08)$ & 0.050 & 0.006 & 0.001 & 122.4 & 138.4 & 139.4 & 141.0 \\
			1 & 3 & 2 & 24 & $(0.84, 2.02)$ & $(2.91, 2.02)$ & 0.050 & 0.008 & 0.001 & 119.5 & 149.6 & 157.9 & 164.8 \\
			1 & 3 & 3 & 23 & $(0.47, 2.05)$ & $(3.16, 2.05)$ & 0.050 & 0.008 & 0.001 & 126.8 & 155.2 & 163.6 & 171.1 \\
			2 & 2 & 1 & 46 & $(0.52, 2.04)$ & $(14.80, 2.04)$ & 0.050 & 0.008 & 0.001 & 250.5 & 325.8 & 345.9 & 365.5 \\
			2 & 2 & 2 & 48 & $(0.62, 2.07)$ & $(2.69, 2.07)$ & 0.050 & 0.007 & 0.001 & 253.3 & 304.0 & 283.3 & 274.7 \\
			2 & 2 & 3 & 47 & $(0.57, 2.07)$ & $(2.76, 2.07)$ & 0.050 & 0.007 & 0.001 & 251.6 & 305.1 & 305.4 & 299.4 \\
			2 & 3 & 1 & 35 & $(0.66, 2.03)$ & $(9.32, 2.03)$ & 0.050 & 0.008 & 0.001 & 183.2 & 239.4 & 257.2 & 274.4 \\
			2 & 3 & 2 & 35 & $(0.35, 2.11)$ & $(2.61, 2.11)$ & 0.050 & 0.007 & 0.001 & 198.8 & 232.0 & 222.9 & 218.8 \\
			2 & 3 & 3 & 35 & $(0.46, 2.10)$ & $(2.61, 2.10)$ & 0.050 & 0.007 & 0.001 & 193.0 & 230.2 & 232.9 & 232.2 \\
			3 & 3 & 1 & 50 & $(0.37, 2.05)$ & $(15.73, 2.05)$ & 0.050 & 0.008 & 0.001 & 283.4 & 361.6 & 380.3 & 398.6 \\
			3 & 3 & 2 & 50 & $(0.41, 2.05)$ & $(11.67, 2.05)$ & 0.050 & 0.008 & 0.001 & 279.9 & 359.9 & 379.3 & 398.4 \\
			3 & 3 & 3 & 51 & $(0.29, 2.14)$ & $(2.49, 2.14)$ & 0.050 & 0.007 & 0.001 & 294.7 & 334.9 & 321.2 & 296.5 \\
			\hline
		\end{tabular}
	\end{center}
\end{sidewaystable}

\def\arraystretch{1.2}
\begin{sidewaystable}[htbp]
	\begin{center}
		\caption{A comparison of the performance of several optimal designs when $a=3$, $K=3$, $\alpha=0.05$, $\beta=0.1$, $\delta=0.545$, $\delta_0=0.138$, $\sigma_k^2=1$ and $r_{k,j}=j$ for $j\in\mathbb{N}_J^+$ and $k\in\mathbb{N}_K$. Rejection probabilities are given to three decimal places, whilst expected sample sizes are given to one decimal place. For brevity,  $FWER_I(p\mid\boldsymbol{0}_K, \boldsymbol{f},\boldsymbol{e},d,J,K,\mathscr{R}_n)\equiv FWER_I(p)$, and $\mathbb{E}(N\mid\boldsymbol{\tau},\boldsymbol{f},\boldsymbol{e},d,J,K,\mathscr{R}_n)\equiv \mathbb{E}(N\mid\boldsymbol{\tau})$.}
		\label{taba4}
		\begin{tabular}{rrrrrrrrrrrrr}
			\hline
			$b$ & $c$ & $d$ & $n$ & $\boldsymbol{f}^\top$ & $\boldsymbol{e}^\top$ & $FWER_I(1)$ & $FWER_I(2)$ & $FWER_I(3)$ & $\mathbb{E}(N\mid\boldsymbol{0}_K)$ & $\mathbb{E}(N\mid\boldsymbol{\delta}_{1,K})$ & $\mathbb{E}(N\mid\boldsymbol{\delta}_{2,K})$ & $\mathbb{E}(N\mid\boldsymbol{\delta}_{3,K})$ \\
			\hline
			1 & 1 & 1 & 18 & $(-0.49, 0.59)$ & $(1.00, 0.59)$ & 0.545 & 0.193 & 0.050 & 103.3 & 84.1 & 79.3 & 77.2 \\
			1 & 1 & 2 & 15 & $(-1.09, 0.83)$ & $(1.18, 0.83)$ & 0.455 & 0.204 & 0.050 & 105.9 & 94.9 & 84.3 & 78.1 \\
			1 & 1 & 3 & 16 & $(-0.20, 0.79)$ & $(2.04, 0.79)$ & 0.393 & 0.163 & 0.050 & 103.9 & 111.3 & 110.9 & 109.5 \\
			1 & 2 & 1 & 7 & $(-0.57, 0.56)$ & $(1.07, 0.56)$ & 0.548 & 0.193 & 0.049 & 41.5 & 37.4 & 35.7 & 34.6 \\
			1 & 2 & 2 & 8 & $(-0.59, 0.82)$ & $(1.15, 0.82)$ & 0.457 & 0.206 & 0.050 & 52.4 & 50.3 & 47.7 & 45.7 \\
			1 & 2 & 3 & 9 & $(-0.33, 0.84)$ & $(1.65, 0.84)$ & 0.394 & 0.163 & 0.050 & 59.2 & 61.6 & 61.4 & 60.7 \\
			1 & 3 & 1 & 4 & $(-0.39, 0.52)$ & $(1.06, 0.52)$ & 0.552 & 0.194 & 0.049 & 22.9 & 21.7 & 21.0 & 20.6 \\
			1 & 3 & 2 & 6 & $(0.14, 0.59)$ & $(1.24, 0.59)$ & 0.465 & 0.212 & 0.050 & 33.3 & 34.2 & 33.9 & 33.5 \\
			1 & 3 & 3 & 7 & $(-0.16, 0.80)$ & $(1.74, 0.80)$ & 0.395 & 0.163 & 0.050 & 44.6 & 47.2 & 47.7 & 47.9 \\
			2 & 2 & 1 & 20 & $(-0.65, 0.83)$ & $(14.85, 0.83)$ & 0.390 & 0.162 & 0.050 & 143.1 & 154.2 & 156.9 & 159.5 \\
			2 & 2 & 2 & 23 & $(-0.40, 0.76)$ & $(1.31, 0.76)$ & 0.451 & 0.202 & 0.050 & 148.3 & 138.0 & 120.0 & 111.8 \\
			2 & 2 & 3 & 20 & $(-0.70, 0.84)$ & $(1.93, 0.84)$ & 0.391 & 0.162 & 0.050 & 142.5 & 143.2 & 138.1 & 130.8 \\
			2 & 3 & 1 & 13 & $(-0.36, 0.81)$ & $(9.94, 0.81)$ & 0.391 & 0.162 & 0.050 & 88.3 & 97.1 & 99.8 & 102.4 \\
			2 & 3 & 2 & 12 & $(-0.60, 0.85)$ & $(1.09, 0.85)$ & 0.458 & 0.206 & 0.050 & 78.1 & 72.9 & 66.9 & 63.1 \\
			2 & 3 & 3 & 13 & $(-0.43, 0.85)$ & $(1.68, 0.85)$ & 0.393 & 0.162 & 0.050 & 87.5 & 89.9 & 88.1 & 85.5 \\
			3 & 3 & 1 & 23 & $(-0.62, 0.83)$ & $(9.85, 0.83)$ & 0.390 & 0.162 & 0.050 & 163.6 & 177.4 & 180.5 & 183.5 \\
			3 & 3 & 2 & 23 & $(-0.63, 0.83)$ & $(16.75, 0.83)$ & 0.390 & 0.162 & 0.050 & 163.8 & 177.5 & 180.5 & 183.5 \\
			3 & 3 & 3 & 23 & $(-0.76, 0.88)$ & $(1.56, 0.88)$ & 0.394 & 0.163 & 0.050 & 162.8 & 157.0 & 146.9 & 132.9 \\
			\hline
		\end{tabular}
	\end{center}
\end{sidewaystable}

\def\arraystretch{1.2}
\begin{sidewaystable}[htbp]
	\begin{center}
		\caption{A comparison of the performance of several optimal designs when $a=1$, $K=3$, $\alpha=0.05$, $\beta=0.1$, $\delta=0.545$, $\delta_0=0.138$, $\sigma_k^2=1$ and $r_{k,j}=j$ for $j\in\mathbb{N}_J^+$ and $k\in\mathbb{N}_K$. Rejection probabilities are given to three decimal places. For brevity, $FWP(p,q\mid \boldsymbol{\delta}_{r,K},\boldsymbol{f},\boldsymbol{e},d,J,K,\mathscr{R}_n)\equiv FWP(p,q\mid r)$.}
		\label{taba5}
		\begin{tabular}{rrrrrrrrr}
			\hline
			$b$ & $c$ & $d$ & $FWP(1,1\mid 1)$ & $FWP(1,2\mid 2)$ & $FWP(1,3\mid 3)$ & $FWP(2,2\mid 2)$ & $FWP(2,3\mid 3)$ & $FWP(3,3\mid 3)$ \\
			\hline
			1 & 1 & 1 & 0.904 & 0.971 & 0.987 & 0.561 & 0.680 & 0.388 \\
			1 & 1 & 2 & 0.901 & 0.969 & 0.985 & 0.832 & 0.936 & 0.636 \\
			1 & 1 & 3 & 0.900 & 0.969 & 0.985 & 0.832 & 0.935 & 0.781 \\
			1 & 2 & 1 & 0.777 & 0.902 & 0.944 & 0.455 & 0.598 & 0.298 \\
			1 & 2 & 2 & 0.779 & 0.900 & 0.942 & 0.654 & 0.817 & 0.479 \\
			1 & 2 & 3 & 0.778 & 0.900 & 0.941 & 0.656 & 0.817 & 0.575 \\
			1 & 3 & 1 & 0.694 & 0.844 & 0.902 & 0.366 & 0.518 & 0.220 \\
			1 & 3 & 2 & 0.701 & 0.846 & 0.903 & 0.554 & 0.734 & 0.395 \\
			1 & 3 & 3 & 0.700 & 0.844 & 0.901 & 0.556 & 0.732 & 0.468 \\
			2 & 2 & 1 & 0.943 & 0.986 & 0.994 & 0.900 & 0.969 & 0.866 \\
			2 & 2 & 2 & 0.947 & 0.987 & 0.995 & 0.903 & 0.971 & 0.636 \\
			2 & 2 & 3 & 0.944 & 0.986 & 0.994 & 0.902 & 0.970 & 0.868 \\
			2 & 3 & 1 & 0.865 & 0.951 & 0.975 & 0.778 & 0.903 & 0.715 \\
			2 & 3 & 2 & 0.864 & 0.951 & 0.975 & 0.773 & 0.902 & 0.535 \\
			2 & 3 & 3 & 0.863 & 0.950 & 0.974 & 0.776 & 0.902 & 0.713 \\
			3 & 3 & 1 & 0.960 & 0.991 & 0.997 & 0.929 & 0.981 & 0.904 \\
			3 & 3 & 2 & 0.960 & 0.991 & 0.997 & 0.928 & 0.980 & 0.902 \\
			3 & 3 & 3 & 0.959 & 0.991 & 0.997 & 0.927 & 0.980 & 0.901 \\
			\hline
		\end{tabular}
	\end{center}
\end{sidewaystable}

\def\arraystretch{1.2}
\begin{sidewaystable}[htbp]
	\begin{center}
		\caption{A comparison of the performance of several optimal designs when $a=3$, $K=3$, $\alpha=0.05$, $\beta=0.1$, $\delta=0.545$, $\delta_0=0.138$, $\sigma_k^2=1$ and $r_{k,j}=j$ for $j\in\mathbb{N}_J^+$ and $k\in\mathbb{N}_K$. Rejection probabilities are given to three decimal places. For brevity, $FWP(p,q\mid \boldsymbol{\delta}_{r,K},\boldsymbol{f},\boldsymbol{e},d,J,K,\mathscr{R}_n)\equiv FWP(p,q\mid r)$.}
		\label{taba6}
		\begin{tabular}{rrrrrrrrr}
			\hline
			$b$ & $c$ & $d$ & $FWP(1,1\mid 1)$ & $FWP(1,2\mid 2)$ & $FWP(1,3\mid 3)$ & $FWP(2,2\mid 2)$ & $FWP(2,3\mid 3)$ & $FWP(3,3\mid 3)$ \\
			\hline
			1 & 1 & 1 & 0.917 & 0.985 & 0.996 & 0.684 & 0.837 & 0.560 \\
			1 & 1 & 2 & 0.905 & 0.973 & 0.987 & 0.815 & 0.943 & 0.579 \\
			1 & 1 & 3 & 0.900 & 0.968 & 0.985 & 0.832 & 0.935 & 0.780 \\
			1 & 2 & 1 & 0.748 & 0.905 & 0.957 & 0.450 & 0.634 & 0.308 \\
			1 & 2 & 2 & 0.774 & 0.903 & 0.943 & 0.623 & 0.821 & 0.412 \\
			1 & 2 & 3 & 0.777 & 0.900 & 0.941 & 0.655 & 0.816 & 0.575 \\
			1 & 3 & 1 & 0.637 & 0.828 & 0.907 & 0.349 & 0.531 & 0.221 \\
			1 & 3 & 2 & 0.693 & 0.847 & 0.903 & 0.518 & 0.734 & 0.319 \\
			1 & 3 & 3 & 0.715 & 0.856 & 0.910 & 0.574 & 0.749 & 0.486 \\
			2 & 2 & 1 & 0.943 & 0.986 & 0.994 & 0.901 & 0.969 & 0.867 \\
			2 & 2 & 2 & 0.961 & 0.993 & 0.997 & 0.912 & 0.983 & 0.670 \\
			2 & 2 & 3 & 0.943 & 0.985 & 0.994 & 0.901 & 0.969 & 0.866 \\
			2 & 3 & 1 & 0.864 & 0.951 & 0.975 & 0.777 & 0.902 & 0.715 \\
			2 & 3 & 2 & 0.858 & 0.951 & 0.975 & 0.741 & 0.903 & 0.515 \\
			2 & 3 & 3 & 0.863 & 0.950 & 0.974 & 0.775 & 0.901 & 0.713 \\
			3 & 3 & 1 & 0.960 & 0.991 & 0.997 & 0.929 & 0.981 & 0.904 \\
			3 & 3 & 2 & 0.960 & 0.991 & 0.997 & 0.929 & 0.981 & 0.904 \\
			3 & 3 & 3 & 0.959 & 0.991 & 0.996 & 0.927 & 0.980 & 0.902 \\
			\hline
		\end{tabular}
	\end{center}
\end{sidewaystable}

\section*{Acknowledgements}
This work was supported by the Medical Research Council [grant number MC\_UP\_1302/2 to M.J.G. and A.P.M.]; and the National Institute for Health Research Cambridge Biomedical Research Centre [MC\_UP\_1302/6 to J.M.S.W.].

\end{document}